\newcommand{\NN}{\mathbb{N}}
\newcommand{\RR}{\mathbb{R}}
\newcommand{\CC}{\mathbb{C}}
\newcommand{\PP}{\mathbb{P}}
\newcommand{\EE}{\mathbb{E}}
\DeclareMathOperator{\SO}{SO}
\algrenewcommand\algorithmicrequire{\textbf{Input:}}
\algrenewcommand\algorithmicensure{\textbf{Output:}}
\DeclareMathOperator*{\argmin}{arg\,min}
\DeclareMathOperator*{\argmax}{arg\,max}
\DeclareMathOperator*{\st}{\mathrm{St}}
\renewcommand{\Re}{\operatorname{Re}}
\newcommand{\lp}{(}\newcommand{\rp}{)}
\newcommand\CONDITION[2]%
\algnewcommand{\IfThenElse}[3]{
  \State \algorithmicif\ #1\ \algorithmicthen\ #2\ \algorithmicelse\ #3}
  \algnewcommand{\IfThen}[3]{
  \State \algorithmicif\ #1\ \algorithmicthen\ #2}
\theoremstyle{plain}
\newtheorem{theorem}{Theorem}
\newtheorem{lemma}{Lemma}
\newtheorem{corollary}{Corollary}
\newtheorem{example}{Example}
\newtheorem{remark}{Remark}
\theoremstyle{definition}
\newtheorem{definition}{Definition} 
\begin{document}

\title{Stereological determination of particle size distributions for similar convex bodies}

\author[1]{Thomas van der Jagt}
\author[1]{Geurt Jongbloed}
\author[2]{Martina Vittorietti}
\affil[1]{Delft Institute of Applied Mathematics, Delft University of Technology.}
\affil[2]{Scienze Economiche, Aziendali e Statistiche, Universit\`a degli studi di Palermo.}
\date{}
\maketitle

\begin{abstract}
Consider an opaque medium which contains 3D particles. All particles are convex bodies of the same shape, but they vary in size. The particles are randomly positioned and oriented within the medium and cannot be observed directly. Taking a planar section of the medium we obtain a sample of observed 2D section profile areas of the intersected particles. In this paper the distribution of interest is the underlying 3D particle size distribution for which an identifiability result is obtained. Moreover, a nonparametric estimator is proposed for this size distribution. The estimator is proven to be consistent and its performance is assessed in a simulation study.
\end{abstract}

\section{Introduction}
In the classical Wicksell corpuscle problem \cite{Wicksell1925} spheres are randomly positioned in an opaque body. The problem is to estimate the size distribution of the spheres using the circular profiles observed in a planar section. The motivation of the problem originated from anatomy as well as astronomy. In the anatomical setting it is of interest as so-called follicles may be observed in slices of organs during post-mortem studies. Such follicles are approximately spherical, resulting in approximately circular section profiles from the intersected follicles. We may then wonder what is the distribution of the radii of the follicles. Questions of similar nature appear in the field of materials science. An important feature of the so-called microstructure of a steel are the grains. Knowing the size distribution of the 3D grains allows for studying the relationship between grain size distribution and mechanical properties of the metal. It is much simpler to obtain 2D information by observing a planar cross section of the metal, compared to obtaining 3D information of the steel's microstructure. Hence, it is of interest to use the 2D observations for estimating 3D information. These problems belong to the field of stereology, which deals with the estimation of higher dimensional information from lower dimensional samples. 

\begin{figure}[t]%
    \centering
    \makebox[\linewidth]{\makebox[\linewidth]{
    \begin{subfigure}[t]{0.65\linewidth}
        \centering
        \includegraphics[height=4.4cm]{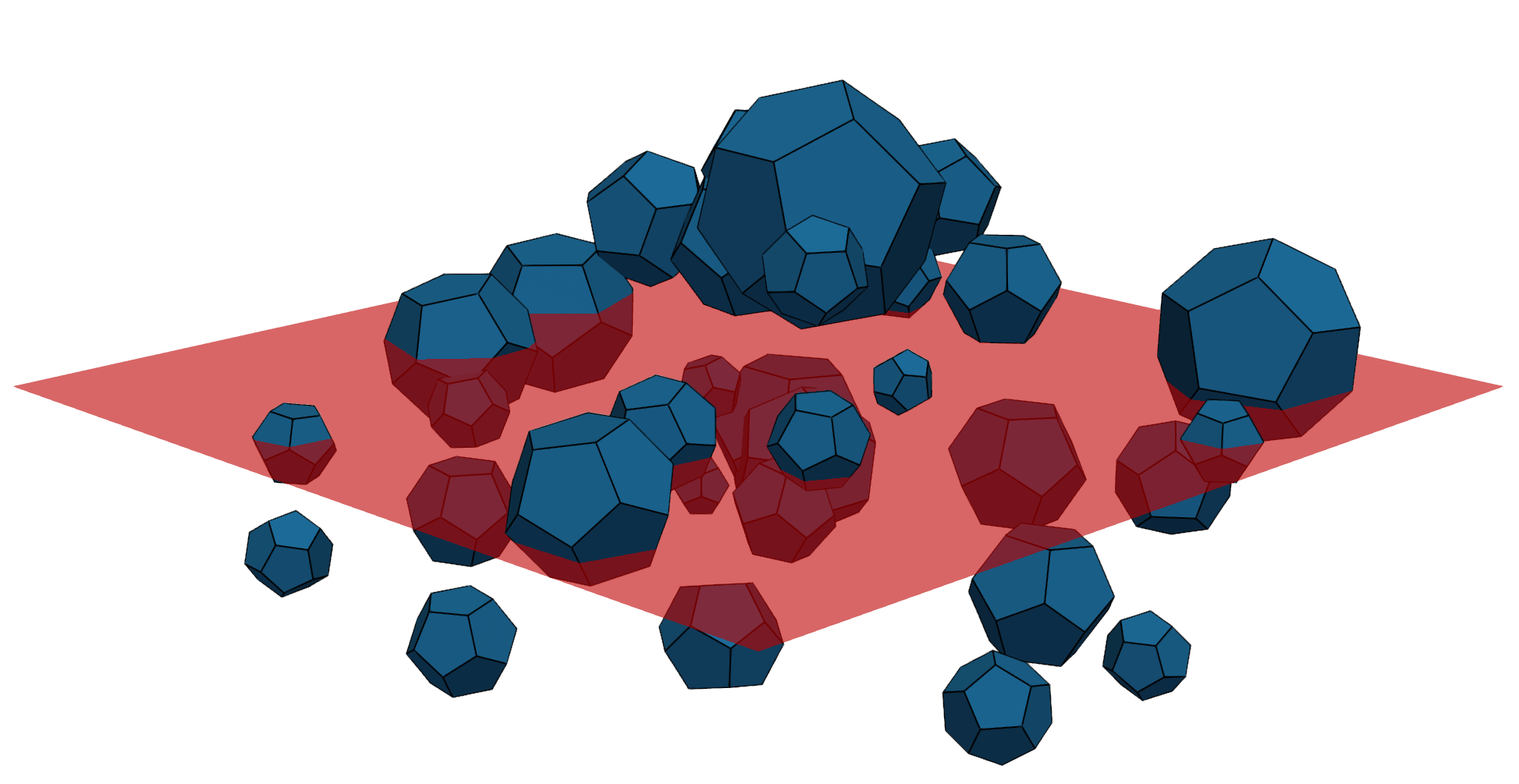}
    \end{subfigure}
    \begin{subfigure}[t]{0.35\linewidth}
        \centering
        \includegraphics[height=4.4cm]{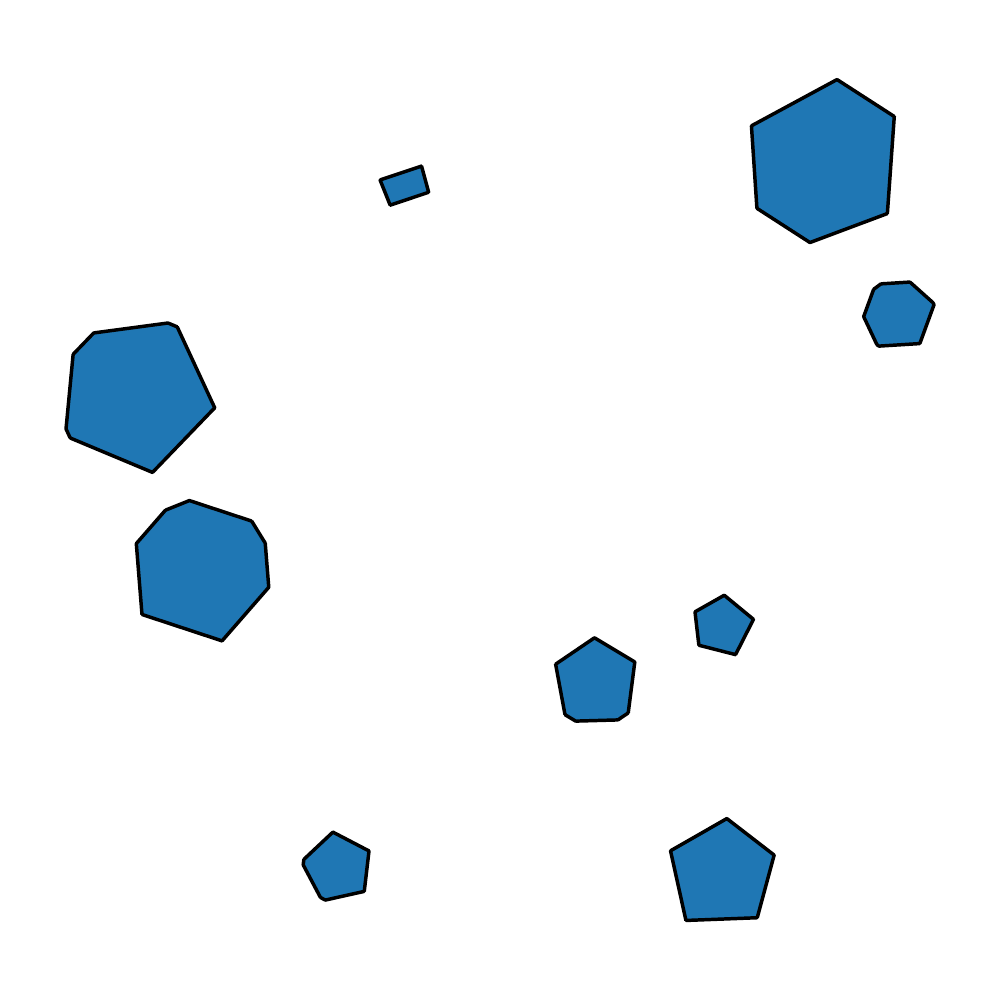}
    \end{subfigure}\hfill}}
    \caption{Left: Random spatial system of convex dodecahedra intersected with a plane. Right: Observed section profiles.}\label{problem_visualization}%
\end{figure}

We study a generalization of the Wicksell problem. Consider 3D particles, convex bodies to be precise, which are randomly positioned in an opaque body and randomly oriented. A convex body is a compact and convex set with non-empty interior. These particles all have the same known shape, but they do not have the same size. The particles cannot be observed directly, we only observe 2D section profiles of these particles in a planar section. We address the statistical problem of estimating the size distribution of the particles, using a sample of observed areas of the section profiles. A visualization of the problem setting is given in Figure \ref{problem_visualization}. In this particular example each particle is a convex dodecahedron.

An overview of estimators for the size distribution in the spherical setting is presented in \cite{Chiu2013}. The problem has been studied for shapes other than spheres as well. In \cite{Ohser1997} the case of cubic particles is considered. In \cite{Ohser1995} a variation of the problem is studied: the particles are random polyhedra, and therefore not all particles have the same shape in this setting. A system of oriented cylinders is considered in \cite{McGarrity2014}. 

The main contributions of this paper are as follows. A key insight in our instance
of the problem highlights that we can separate the shape of the particles
from their sizes in the sense that an observed area may be interpreted as the
product of two independent random variables, one related to the particle size and
the other related to the known particle shape. The density function of the shape-related random variable is explicitly known only in exceptional cases, therefore we rely on the simulation procedure proposed in \cite{vdjagt2022} that can be used to approximate it arbitrarily well. 

Using that shape-related distribution as ingredient, we design a maximum likelihood procedure to estimate the size distribution of the particles, a procedure that can be used for a large class of possible shapes. Furthermore, we show consistency of the resulting estimator and provide algorithms that can be used to compute it. Additionally, we assess the proposed estimator in a small simulation study in which we focus on convex polyhedra for the shape of the particles.   

The paper is organized as follows. In section \ref{section_preliminaries} we introduce necessary notation and definitions. In section \ref{section_stereological_equation} an integral equation is derived which describes the problem. Via this equation we obtain an identifiability result in section \ref{section_identifiability} stating that the profile area distribution uniquely determines the 3D size distribution. We define an estimator for the so-called biased size distribution in section \ref{section_estimator}. In section \ref{section_consistency} we prove consistency of this estimator. Algorithms for computing the proposed estimator are discussed in section \ref{section_algorithms}. In section \ref{section_truncation} we describe how to estimate the particle size distribution via the biased size distribution. In section \ref{section_simulations} some simulations are performed and at the end of the paper we provide some conclusions in section \ref{section_conclusion}.  

\section{Preliminaries}\label{section_preliminaries}
In this section we introduce necessary notation and collect some preliminary results which are needed in the rest of this paper. We consider a system of randomly positioned particles, and each particle is a convex body. Formally, a convex body is a convex and compact set with non-empty interior. Given a $\lambda > 0$ and a set $A \subset \RR^3$ the scalar multiplication of $\lambda$ with $A$ is defined as: $\lambda A = \{\lambda x: x \in A\}$. Let $\SO(3)$ denote the rotation group of degree 3. It contains all $3\times3$ rotation matrices, which are orthogonal matrices of determinant 1. Some definitions are necessary to precisely describe what is meant by a random plane section of a particle. The sphere in $\RR^3$ is given by: $S^2 = \{(x_1,x_2,x_3) \in \RR^3: x_1^2 + x_2^2 + x_3^2 = 1\}$. The upper hemisphere in $\RR^3$ is given by: $S_{+}^{2} = \{(x_1,x_2,x_3) \in S^2: x_3 \geq 0\}$. Let $\sigma_{2}$ denote the spherical measure on $S^{2}$, also known as the spherical Lebesgue measure on $S^{2}$. In integrals over (a subset of) $S^{2}$ the notation $\mathrm{d}\theta$ should be interpreted as $\mathrm{d}\sigma_{2}(\theta)$. A plane in $\RR^3$ may be parameterized via a unit normal vector $\theta \in S_{+}^{2}$ and its signed distance $s\in \RR$ to the origin:
\begin{equation}
    T_{\theta, s} = \{x \in \RR^3: \langle x,\theta\rangle = s\}, \label{hyperplane_eq}
\end{equation}
with $\langle \cdot,\cdot\rangle$ being the usual inner product in $\RR^3$.

We define what is meant by an Isotropic Uniformly Random (IUR) plane hitting a given convex body in $\RR^3$. The notion of IUR planes was introduced in \cite{Davy1977}, we follow the definition as in section 5.6 in \cite{Baddeley2004}.
\begin{definition}[IUR plane]\label{iur_plane_def}
An IUR plane $T$ hitting a given convex body $K \subset \RR^3$, is defined as $T = T_{\Theta, S}$ where $(\Theta, S)$ has joint probability density, $f_K : S_{+}^{2} \times \RR \to [0, \infty)$ given by:
\begin{equation*}
    f_K(\theta, s) = \begin{cases}\frac{1}{2\pi \bar{b}(K)} & \text{if } K \cap T_{\theta,s} \neq \emptyset \\
    0 & \text{otherwise,}
    \end{cases}
\end{equation*}
with $T_{\theta,s}$ as in (\ref{hyperplane_eq}) and $\bar{b}(K)$ is the mean caliper diameter, or mean width of $K$:
\[\bar{b}(K) = \frac{1}{2\pi}\int_{S_{+}^{2}} L(p_\theta(K))\mathrm{d}\theta.\]
Here, $p_\theta(K)$ represents the orthogonal projection of $K$ on the line through the origin with direction $\theta$. $L(p_\theta(K))$ is the length of this orthogonal projection, and may also be called the width of $K$ in direction $\theta$. Convexity of $K$ ensures $L(p_\theta(K))$ is the length of an interval.
\end{definition}

Loosely speaking this means that for an IUR plane through $K$, every plane which has a non-empty intersection with $K$ has equal probability of occurring. We need the following lemma, which appears as proposition 1 in \cite{Davy1977}:
\begin{lemma}\label{iur_properties_lemma}
    Suppose that $Q \subset \RR^3$ is a convex body and $K \subset Q$ is another convex body. Let $T$ be an IUR plane hitting $Q$, then:
    \begin{enumerate}
        \item Hitting probability: 
        \begin{equation*}
        \PP(T \cap K \neq \emptyset) = \frac{\bar{b}(K)}{\bar{b}(Q)}.
        \end{equation*}
        \item Conditional property: Given that $T$ hits $K$, i.e. $T \cap K \neq \emptyset$, $T$ is an IUR plane hitting $K$.
    \end{enumerate}
\end{lemma}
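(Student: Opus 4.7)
The plan is to compute the density of $(\Theta, S)$ restricted to the event $\{T \cap K \neq \emptyset\}$ directly from Definition \ref{iur_plane_def}, and extract both statements by normalizing that density.

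First I would handle the hitting probability. Since $K \subset Q$, the event $\{K \cap T_{\theta,s}\neq\emptyset\}$ is contained in $\{Q \cap T_{\theta,s}\neq\emptyset\}$, so
\[
\PP(T \cap K \neq \emptyset)
=\int_{S_{+}^{2}}\int_{\{s:\,K\cap T_{\theta,s}\neq\emptyset\}}\frac{1}{2\pi\bar b(Q)}\,\mathrm ds\,\mathrm d\theta.
\]
The key geometric step is to identify, for each fixed $\theta \in S_{+}^{2}$, the set of signed distances $s$ for which $T_{\theta,s}$ meets $K$. Because $K$ is convex and compact, $\langle\cdot,\theta\rangle$ attains its min and max on $K$, and $T_{\theta,s}\cap K\neq\emptyset$ exactly when $s$ lies in the closed interval between these extreme values. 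That interval is precisely $p_\theta(K)$, so its length is $L(p_\theta(K))$. Plugging this in and applying the definition of $\bar b(K)$ yields the ratio $\bar b(K)/\bar b(Q)$.

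For the conditional property I would write the conditional density of $(\Theta, S)$ given $T\cap K\neq\emptyset$ as
\[
f_{(\Theta,S)\mid T\cap K\neq\emptyset}(\theta,s)
=\frac{f_Q(\theta,s)\,\mathbf 1_{\{K\cap T_{\theta,s}\neq\emptyset\}}}{\PP(T\cap K\neq\emptyset)}.
\]
Using that $K\cap T_{\theta,s}\neq\emptyset$ forces $Q\cap T_{\theta,s}\neq\emptyset$ (so the numerator simplifies to $\tfrac{1}{2\pi\bar b(Q)}$ on the relevant set) and substituting the hitting probability from part 1, the $\bar b(Q)$ factors cancel and one obtains exactly $f_K(\theta,s)$ from Definition \ref{iur_plane_def}. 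By definition, $T$ is then an IUR plane hitting $K$.

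The only nontrivial step, and hence the main obstacle, is the geometric identification of $\{s:K\cap T_{\theta,s}\neq\emptyset\}$ with the projection interval $p_\theta(K)$; everything else is a direct computation with the density in Definition \ref{iur_plane_def} and an application of Fubini. I would make sure convexity is used explicitly so that the set of admissible $s$ is a single interval (as opposed to a union of intervals), which is what makes its one-dimensional Lebesgue measure equal to $L(p_\theta(K))$.
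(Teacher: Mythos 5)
Your proposal is correct. Note that the paper itself does not prove this lemma; it is stated as a citation of Proposition~1 in Davy (1977), so your argument is a self-contained verification of the cited result rather than a rival to a proof in the text. The computation is the standard one: for fixed $\theta$ the set $\{s: K\cap T_{\theta,s}\neq\emptyset\}$ is the closed interval $[\min_{x\in K}\langle x,\theta\rangle,\ \max_{x\in K}\langle x,\theta\rangle]$ (compactness gives the endpoints, convexity/connectedness gives all intermediate values), whose length is $L(p_\theta(K))$; integrating the constant density $\tfrac{1}{2\pi\bar b(Q)}$ over this set and over $S^2_+$, and using $K\subset Q$ so that the indicator of $\{K\cap T_{\theta,s}\neq\emptyset\}$ is dominated by that of $\{Q\cap T_{\theta,s}\neq\emptyset\}$, gives $\bar b(K)/\bar b(Q)$; the conditional density then normalizes to exactly $f_K$ of Definition~\ref{iur_plane_def}. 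Two cosmetic points you may wish to tighten: strictly speaking $p_\theta(K)$ is a segment on the line through the origin in direction $\theta$ and the admissible $s$-values are its coordinates under $s\mapsto s\theta$, so the identification is via this isometry; and since $K$ is a convex body (non-empty interior) one has $\bar b(K)>0$, so the conditioning event has positive probability and the conditional density is well defined.
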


When a convex body $K$ is hit by an IUR plane, we obtain a section with a random area. Let $G_K$ denote the cumulative distribution function (CDF) associated with such a random area. It is sometimes referred to as cross section area distribution. The CDF $G_K$ is studied in \cite{vdjagt2022}, we collect the following properties:

\begin{theorem}\label{G_properties_theorem}
    Let $K \subset \RR^3$ be a convex body and let $T$ be an IUR plane hitting $K$. The random variable $Z := \mathrm{area}(K\cap T)$ has distribution function $G_K$. Let $G_K^S$ denote the distribution function of $\sqrt{Z}$. The following properties hold:
    \begin{enumerate}
    \item Motion invariance: $G_K$ and $G_K^S$ are invariant under translations and rotations of $K$. 
    \item Scaling of convex bodies: $G_{\lambda K}(z) = G_K\left(\frac{z}{\lambda^{2}}\right)$ for all $\lambda > 0$, $z \in \RR$.
        \item Absolute continuity: If $K$ is strictly convex or if it is a polyhedron such that each edge is parallel to at most one other edge, then $G_K$ and $G_K^S$ have a Lebesgue density.
        \item Initial monotonicity: If $G_K^S$ has Lebesgue density $g_K^S$, then $g_K^S$ is non-decreasing on $(0,\tau_K)$ for some $\tau_K > 0$.
    \end{enumerate}
\end{theorem}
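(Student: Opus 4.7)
The plan is to handle the four items in the order stated, since the first two are soft invariance statements while the latter two require geometric arguments on the section-area function $A(\theta, s) = \mathrm{area}(K \cap T_{\theta, s})$.

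For items (1) and (2), I would argue directly from the definition of the IUR-plane density. Motion invariance reduces to three facts: $\bar{b}(\cdot)$ is itself motion-invariant (mean width is a standard rigid-motion invariant), the spherical measure $\sigma_{2}$ is rotation-invariant, and for a rotation $R \in \SO(3)$ and translation $v$ one has the identity $\mathrm{area}((RK+v) \cap T_{\theta,s}) = \mathrm{area}(K \cap T_{R^{T}\theta,\, s - \langle v, \theta\rangle})$. A change of variables $(\theta,s) \mapsto (R^{T}\theta, s - \langle v,\theta\rangle)$ then transports $f_{RK+v}$ onto $f_{K}$, so $Z$ has the same law and hence $G_{K}$ and $G_{K}^{S}$ coincide. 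For scaling, use $\bar{b}(\lambda K) = \lambda \bar{b}(K)$, $L(p_\theta(\lambda K)) = \lambda L(p_\theta(K))$, and $\lambda K \cap T_{\theta,s} = \lambda(K \cap T_{\theta, s/\lambda})$. Substituting $s' = s/\lambda$ shows the law of $Z_{\lambda K}$ equals the law of $\lambda^{2} Z_{K}$, giving $G_{\lambda K}(z) = G_{K}(z/\lambda^{2})$.

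For item (3), I would condition on $\Theta = \theta$ and use that, under the IUR law, $S$ given $\Theta = \theta$ is uniform on the support interval $I_\theta = \{s : K \cap T_{\theta,s} \neq \emptyset\}$. Hence absolute continuity of $Z = A(\Theta, S)$ reduces to showing that, for almost every $\theta \in S_{+}^{2}$, the map $s \mapsto A(\theta, s)$ is continuous and not locally constant on $I_\theta$, after which a piecewise change of variables produces a conditional density which I then integrate against $\sigma_{2}/(2\pi \bar b(K))$. In the strictly convex case, the Brunn–Minkowski inequality applied to the planar sections of $K$ tells us that $s \mapsto A(\theta,s)^{1/2}$ is concave on $I_\theta$, and strict convexity upgrades this to strict concavity, which rules out any flat subinterval. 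In the polyhedral case, for all $\theta$ outside a null set (directions parallel to a face or to an edge), $A(\theta,\cdot)$ is piecewise polynomial of degree $\le 2$, with breakpoints at the finitely many values of $s$ for which $T_{\theta,s}$ passes through a vertex of $K$; the hypothesis that each edge is parallel to at most one other edge is used to show that on every piece the quadratic is nonconstant. For $G_{K}^{S}$ the additional transformation $z \mapsto \sqrt z$ is a smooth bijection off $0$, so it inherits absolute continuity.

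For item (4), I would localize near the extremes of $I_\theta$. Fix $\theta$ and let $h = s_{\max}(\theta) - s$ be the depth into $K$; by compactness of $S_{+}^{2}$ there is a uniform $\delta > 0$ such that sections with $h < \delta$ (together with the symmetric analysis at $s_{\min}(\theta)$) account for all values of $\sqrt{Z}$ below some common threshold $\tau_K$. In the strictly convex case a quadratic approximation of the support function at the contact point yields $A(\theta,s) \sim c(\theta)\, h$, so conditional on $\theta$ the density of $\sqrt Z$ near $0$ is proportional to $z/L(p_\theta(K))$; averaging in $\theta$ preserves the monotone factor $z$, giving a density that is linear and increasing near $0$. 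In the polyhedral case the extreme of $\langle \cdot,\theta\rangle$ is generically attained at a vertex, the small section is a scaled copy of a limiting polygon, so $A(\theta,s) \sim c(\theta) h^{2}$ and the density of $\sqrt Z$ near $0$ is (locally in $\theta$) constant, hence non-decreasing. I expect the main obstacle to be the polyhedral part of item (3): verifying that the parallel-edge hypothesis really excludes a constant quadratic piece in $A(\theta,\cdot)$ requires bookkeeping over all combinatorial types of section polygon, and this is where the genuine work lies; everything else reduces to invariance, Brunn–Minkowski, or a Taylor expansion.
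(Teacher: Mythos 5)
First, note that the paper does not prove this theorem at all: it is quoted as a collection of properties established in \cite{vdjagt2022}, so there is no in-paper argument to compare against and your sketch has to stand on its own. Items (1) and (2) are fine, and the strictly convex half of item (3) is essentially right in spirit, but with one caveat: the reduction ``$s\mapsto A(\theta,s)$ continuous and not locally constant $\Rightarrow$ the conditional law of $Z$ given $\Theta=\theta$ has a density'' is false as stated (a continuous, strictly increasing map can push Lebesgue measure onto a singular measure). It is repairable here precisely because of the Brunn--Minkowski fact you invoke: $s\mapsto\sqrt{A(\theta,s)}$ is concave, so on each monotone piece its inverse is convex, hence absolutely continuous, and strict convexity of $K$ excludes flat pieces (a flat piece would force a prism-shaped slab inside $K$). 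The genuinely missing piece is the polyhedral half of (3), which you yourself defer to ``bookkeeping'': showing that the parallel-edge hypothesis prevents $A(\theta,\cdot)$ from being constant on a piece for a positive-measure set of directions is the actual content of the cited result, not a routine verification. The cube illustrates the danger: for every $\theta$ in a neighbourhood of a facet normal, the middle sections meet only the four edges parallel to that normal and are rigid translates of one another, so $A(\theta,\cdot)$ is constant on an interval for a set of directions of positive measure; the hypothesis ``each edge parallel to at most one other edge'' is exactly what rules out this local-prism mechanism, and without an argument for why no other cancellation can make a quadratic piece constant, item (3) is not proved.

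Item (4) has a more concrete problem. The statement is for an arbitrary convex body whose $G_K^S$ admits a density, so a case split into smooth strictly convex bodies and polytopes does not cover it; moreover your compactness claim fails: for a polytope, the depth of the first breakpoint below the extreme point tends to $0$ as $\theta$ approaches a direction orthogonal to an edge or facet, so there is no uniform $\delta>0$ on which the cap asymptotics $A(\theta,s)\sim c(\theta)h^{2}$ (or $\sim c(\theta)h$, which anyway needs smoothness/positive curvature, not mere strict convexity) are valid, and the mixture over $\theta$ is not controlled on a common interval. The clean argument uses the same Brunn--Minkowski concavity you already used in (3): conditionally on $\Theta=\theta$, $\sqrt{Z}=\psi_\theta(S)$ with $S$ uniform on $I_\theta$ and $\psi_\theta$ concave, vanishing at both endpoints, with maximum $m_\theta$; hence the conditional CDF $z\mapsto\PP(\sqrt{Z}\le z\mid\Theta=\theta)$ is convex on $[0,m_\theta)$. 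Since $\theta\mapsto m_\theta$ is positive and continuous on the compact hemisphere, $\tau_K:=\inf_\theta m_\theta>0$, and a mixture of CDFs each convex on $[0,\tau_K)$ is convex there, so any density of $G_K^S$ is non-decreasing on $(0,\tau_K)$ — for every convex body, with no case analysis.
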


Note in particular that for a large class of convex bodies, $G_K$ has a Lebesgue density. Whether $G_K$ is absolutely continuous with respect to Lebesgue measure for all convex bodies is an open problem. In section \ref{section_estimator} we define an estimator for the particle size distribution. It will then become clear why the square root transformation in Theorem \ref{G_properties_theorem} is relevant.

\section{Derivation of the stereological integral equation}\label{section_stereological_equation}
In this section we give a formal description of the model and derive a stereological integral equation. As mentioned in the introduction, stereology deals with estimating higher dimensional information from lower dimensional samples. The stereological equation in this section directly relates the distribution of the 3D particle sizes to the distribution of observed 2D section profile areas. We derive an expression for the density $f_A$ of the observed section areas. Another derivation of this density appears in chapter 16 of \cite{Santalo2004}. The derivation has two purposes, it provides an intuitive understanding of the problem and the equation is used for defining an estimator. 

Let $Q\subset \RR^3$ be the opaque convex body containing the randomly positioned particles. The intersection of $Q$ with a random plane yields a sample of observed section profile areas. For now, assume that $Q$ contains just one particle, a convex body $K_1$. Assume that $K_1$ is similar to a known convex body $K \subset \RR^3$, which we refer to as the reference particle. This means that there exists a rotation $M \in \SO(3)$, a point $x \in \RR^3$ and a scalar $\Lambda > 0$ such that $K_1 = \Lambda MK + x := \{\Lambda Mk + x: k \in K\}$. We refer to the scalar $\Lambda$ as the size of $K_1$, which is distributed according to an unknown size distribution with CDF $H$ and PDF $h$. As such the size is the scaling with respect to the reference particle, which has size 1. The mean size is denoted by:
\[\EE(\Lambda) = \int_0^\infty \lambda h(\lambda)\mathrm{d}\lambda,\]
and we assume $0 < \EE(\Lambda) < \infty$ throughout. Let $T$ be an IUR plane hitting $Q$. Let $B := \{T\cap K_1 \neq \emptyset\}$ be the event that $K_1$ is hit by $T$. By Lemma \ref{iur_properties_lemma}, the probability that $K_1$ is hit by $T$ given that it has size $\lambda$ is given by:
\begin{equation}
    \PP(B|\Lambda = \lambda) = \frac{\bar{b}(\lambda K)}{\bar{b}(Q)} = \lambda\frac{\bar{b}(K)}{\bar{b}(Q)}.\label{hitting_prob}
\end{equation}
Here, we use the fact that $\bar{b}(\lambda K) = \lambda\bar{b}(K)$ and $\bar{b}(K)$ is invariant under rotations and translations of $K$. While $\Lambda$ is drawn from $H$, the size of a particle which appears in the plane section follows a different distribution from $H$. By this we mean that $\Lambda | B$ is not distributed according to $H$. Note that the probability in (\ref{hitting_prob}) is proportional to $\lambda$, via Bayes' rule the density of $\Lambda | B$, denoted by $h^b$ is computed as:
\begin{equation*}
    h^b(\lambda) := f_{\Lambda|B}(\lambda) = \frac{\PP(B|\Lambda = \lambda)h(\lambda)}{\PP(B)} = \frac{\PP(B|\Lambda = \lambda)h(\lambda)}{\int_{0}^\infty\PP(B|\Lambda = \lambda)h(\lambda)\mathrm{d}\lambda} = \frac{\lambda h(\lambda)}{\EE(\Lambda)}.
\end{equation*}
Throughout this paper we refer to $h^b$ as the density of the length-biased size distribution associated with $h$. Let $H^b$ be the CDF corresponding to $h^b$ and note that $H$ and $H^b$ are related via:
\begin{equation}
    H^b(\lambda) = \frac{\int_0^\lambda x\mathrm{d}H(x)}{\int_0^\infty x\mathrm{d}H(x)} \text{\quad and \quad } H(\lambda) = \frac{\int_0^\lambda \frac{1}{x}\mathrm{d}H^b(x)}{\int_0^\infty \frac{1}{x}\mathrm{d}H^b(x)}, \text{\quad } \lambda \geq 0. \label{H_Hb_relationship}
\end{equation}
We refer to $H^b$ as the length-biased size distribution function, or the length-biased version of $H$. For an elaborate overview of length-biased and more generally size-biased distributions we refer to \cite{Arratia2019}. The authors also prove the following general property of length-biased distributions: if $\Lambda_b \sim H^b$ and $\Lambda \sim H$, then: $\PP(\Lambda_b \geq \lambda) \geq \PP(\Lambda \geq \lambda)$. Hence, as $H^b$ is the size distribution of the particles which appear in the plane section, this means that larger particles are more likely to appear in the cross section. 

We can now derive the distribution of an observed section area, resulting from $K_1$ being hit by the section plane. Conditional on $K_1$ being hit let $A := \text{area}(K_1 \cap T)$. By the conditional property of IUR planes in Lemma \ref{iur_properties_lemma}, given that $T$ hits $K_1$ it is an IUR plane hitting $K_1$. Therefore, if $K_1$ with size $\lambda$ appears in the section plane, its section area is distributed according to $G_{\lambda K}$. Using the rules of conditional probability we find:
\begin{align*}
F_A(a) := \PP(A \leq a | B) = \int_0^\infty \PP(A \leq a| B, \Lambda=\lambda)f_{\Lambda|B}(\lambda)\mathrm{d}\lambda = \int_0^\infty G_{\lambda K}(a)\mathrm{d}H^b(\lambda).
\end{align*}
Using point 2 of Theorem \ref{G_properties_theorem}, $F_A$ may be written as:
\begin{equation}
F_A(a) = \int_0^\infty G_{K}\left(\frac{a}{\lambda^2} \right)\mathrm{d}H^b(\lambda) = \frac{1}{\EE(\Lambda)}\int_0^\infty G_{K}\left(\frac{a}{\lambda^2}\right)\lambda\mathrm{d}H(\lambda). \label{cdf_section_areas}
\end{equation}
Suppose now that we randomly position and orient non-overlapping particles $K_1, K_2, \dots $ in $Q$, each similar to $K$. More specifically, the centers of the particles are distributed according to a homogeneous Poisson point process. As for the orientations, all orientations of the particles are equally likely and independent. The sizes of the particles $\Lambda_1,\Lambda_2,\dots$ are independent and identically distributed (iid) according to $H$. Intersecting $Q$ with an IUR plane yields an iid sample $A_1,\dots,A_n$ from $F_A$ of observed section areas, for some random $n$. Let $K$ be a convex body such that $G_K$ has a density $g_K$, recall Theorem \ref{G_properties_theorem}. Let $a_\text{max}$ be the largest possible section area of $K$, such that $g_K$ has support $(0,a_\text{max})$. Then, $F_A$ has a density given by:
\begin{equation}
    f_A(a) = \frac{1}{\EE(\Lambda)}\int_{\sqrt{\frac{a}{a_{\text{max}}}}}^\infty g_{K}\left(\frac{a}{\lambda^2}\right)\frac{1}{\lambda}\mathrm{d}H(\lambda).\label{density_stereological_equation}
\end{equation}
The stereological equation (\ref{density_stereological_equation}) directly relates the sizes of the 3D particles to the areas of the observed 2D section profiles. 

\begin{example}[Wicksell's corpuscle problem]\label{Wicksell_example}
Choose for the reference particle $K = \bar{B}(0,1)=\{x \in \RR^3: \left\Vert x \right\Vert \leq 1 \}$, the ball with radius 1. Then: $g_K(z) = 1/(2\pi\sqrt{1 - z/\pi})$, $0 < z < \pi$. We may interpret $H$ as the distribution function of the radii of the 3D balls. Note that any plane section of a ball yields a circular disc. Given $A \sim f_A$ set $A = \pi R^2$, the density of the observed circle radii satisfies: $f_R(r) = f_A(\pi r^2)2\pi r$, $r>0$. Combining this with (\ref{density_stereological_equation}) yields:
\[f_R(r) = \frac{1}{\EE(\Lambda)}\int_r^\infty \frac{1}{2\pi\sqrt{1 - \frac{r^2}{\lambda^2}}}\frac{2\pi r}{\lambda}\mathrm{d}H(\lambda) = \frac{r}{\EE(\Lambda)}\int_r^\infty \frac{1}{\sqrt{\lambda^2 - r^2}}\mathrm{d}H(\lambda),\]
which corresponds to the well-known Wicksell's integral equation \cite{Wicksell1925}.
\end{example}

\begin{remark}\label{reference_particle_interpretation_remark}
By taking an appropriate choice for the reference particle, the size distribution may be directly related to a more convenient distribution. For example, if the reference particle has diameter 1, then the size distribution corresponds to the distribution of the diameters of the particles. When choosing a reference particle with volume 1, then a particle with size $\lambda$ has volume $\lambda^3$. The volume distribution function is then given by $F_V(x) = \PP(\Lambda^3 \leq x) = H(x^{\frac{1}{3}})$. 
\end{remark}

The derived stereological equation also holds under different assumptions. The random system of particles may be defined by choosing an isotropic typical particle, and then positioning the particles using a stationary point process on $\RR^3$. This model is also known as a germ-grain model. Relevant references are sections 6.5 and 10.5 in \cite{Chiu2013}, as well as \cite{Ohser2000} and \cite{Ohser1997}. Hence, there is no need to restrict the particles to an opague body or to position the particles via a Poisson point process. 

In this setting, let $N_V$ denote the expected number of 3D particles per unit volume, which corresponds to the intensity parameter of the point process. Intersecting the system of particles with a plane, let $N_A$ denote the expected number of observed 2D section profiles per unit area. By combining the well known stereological equation (Theorem 10.1 in \cite{Chiu2013}):
\begin{equation*}
    N_A = N_V\bar{\bar{b}} \text{ \quad with \quad } \bar{\bar{b}} := \bar{b}(K)\int_0^\infty \lambda\mathrm{d}H(\lambda), 
\end{equation*}
and (\ref{cdf_section_areas}), yields:
\begin{equation}
    N_A(1 - F_A(a)) = N_V\Bar{b}(K)\int_{0}^\infty \lambda(1-G_{\lambda K}(a))\mathrm{d}H(\lambda). \label{general_stereological}
\end{equation}
A derivation of a slightly more general version of (\ref{general_stereological}) may be found in chapter 6 of \cite{Benes2004}. We specifically mention (\ref{general_stereological}) since it appears more frequently in the
literature than (\ref{density_stereological_equation}).

In order to obtain a better understanding of the problem it helps to apply a transformation. We apply a square root transformation to (\ref{cdf_section_areas}). For $A \sim F_A$, set $S = \sqrt{A}$ such that $S \sim F_S$ and $F_S(s) = F_A(s^2)$ for $s \in \RR$. As in Theorem \ref{G_properties_theorem}, let $Z \sim G_K$ then $\sqrt{Z} \sim G_K^S$ and $G_K^S(z) = G_K(z^2)$ for $z \in \RR$. Let $s \in \RR$, then the following holds:
\begin{equation}
  F_S(s) = \int_0^{\infty} G_K^S\left(\frac{s}{\lambda}\right)\mathrm{d}H^b(\lambda).\label{FS_cdf}  
\end{equation}
This expression may be recognized as the distribution function corresponding to a product of two independent random variables. This is a key insight which is made precise in the following lemma.

\begin{lemma}\label{independent_decomposition_lemma}
Consider a distribution function $H$ with length-biased version $H^b$. Suppose $Z \sim G_K$ and $\Lambda_b \sim H^b$ with $Z$ and $\Lambda_b^2$ independent. Set $A = Z\Lambda_b^2$. Then, $A \sim F_A$, and $F_A, G_K$ and $H^b$ are related via (\ref{cdf_section_areas}).
\end{lemma}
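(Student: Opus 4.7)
The plan is to directly compute the distribution function of $A = Z\Lambda_b^2$ by conditioning on $\Lambda_b$ and using independence, then match the resulting expression to (\ref{cdf_section_areas}).

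First I would write, for any $a \geq 0$,
\[
\PP(A \leq a) = \PP(Z\Lambda_b^2 \leq a) = \int_0^\infty \PP(Z\Lambda_b^2 \leq a \mid \Lambda_b = \lambda)\,\mathrm{d}H^b(\lambda),
\]
which is justified by the law of total probability applied to the continuous conditioning variable $\Lambda_b$. Since the size distribution is supported on $(0,\infty)$ (we assumed $0 < \EE(\Lambda) < \infty$, and $h^b(\lambda) = \lambda h(\lambda)/\EE(\Lambda)$ vanishes at $\lambda = 0$), we may restrict attention to $\lambda > 0$ and divide safely.

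Next, using the independence of $Z$ and $\Lambda_b^2$ (equivalently, of $Z$ and $\Lambda_b$), the conditional probability simplifies:
\[
\PP(Z\Lambda_b^2 \leq a \mid \Lambda_b = \lambda) = \PP\!\left(Z \leq \tfrac{a}{\lambda^2}\right) = G_K\!\left(\tfrac{a}{\lambda^2}\right).
\]
Substituting this back yields
\[
\PP(A \leq a) = \int_0^\infty G_K\!\left(\tfrac{a}{\lambda^2}\right)\mathrm{d}H^b(\lambda),
\]
which is exactly the first expression in (\ref{cdf_section_areas}); the second equality in (\ref{cdf_section_areas}) then follows from the definition $\mathrm{d}H^b(\lambda) = \lambda\,\mathrm{d}H(\lambda)/\EE(\Lambda)$. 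Hence $A \sim F_A$ as claimed.

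There is no real obstacle here — the statement is essentially a restatement of the derivation of (\ref{cdf_section_areas}) in product-of-independent-random-variables form. The only care needed is to ensure that conditioning is well defined and that the change of variable $z \mapsto a/\lambda^2$ is valid, both of which follow from the support of $H^b$ lying in $(0,\infty)$.
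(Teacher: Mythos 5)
Your proof is correct and follows essentially the same route as the paper: the paper invokes the generic product-of-independent-random-variables formula $F_X(x)=\int_0^\infty F_Y(x/z)\,\mathrm{d}F_Z(z)$ and matches it to the (square-root transformed) equation, which is exactly the conditioning-on-$\Lambda_b$ computation you carry out explicitly against (\ref{cdf_section_areas}). No gap; the extra remarks about the support of $H^b$ and the validity of conditioning are fine but not essential.
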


\begin{proof}
Let $X,Y,Z$ be non-negative random variables, with CDF $F_X,F_Y$ and $F_Z$ respectively. If $X=YZ$ with $Y$ and $Z$ independent, then their distribution functions are related via:
\[F_X(x) = \int_0^\infty F_Y\left(\frac{x}{z}\right)\mathrm{d}F_Z(z).\]
Comparing this with (\ref{FS_cdf}), the result is immediate.
\end{proof}

Let us provide some further intuition for Lemma \ref{independent_decomposition_lemma}. Note that point 2 of Theorem \ref{G_properties_theorem} means that for a given size $\lambda > 0$, if $Z\sim G_K$ then $Z\lambda^2 \sim G_{\lambda K}$. As the sizes of the particles in the section plane are distributed according to $H^b$, this hints towards the relationship given in Lemma \ref{independent_decomposition_lemma}. 

Therefore, there are two main considerations in this problem. First, the size distribution of particles appearing in the cross section is a length-biased version of the actual size distribution. Second, we can separate the common shape of the particles and their sizes in some sense. Taking a random size from $H^b$, and independently taking an IUR section of the reference particle yields a sample from $F_A$ via the relationship given in Lemma \ref{independent_decomposition_lemma}. 

\section{Identifiability of the particle size distribution}\label{section_identifiability}


In this section we present a general identifiability result for our model. This means that under appropriate conditions, given a known reference particle, there are no two size distributions which yield the same distribution of observed section areas. For this result we need the Mellin-Stieltjes transform, which we will also refer to as the Mellin transform. While characteristic functions appear naturally when studying sums of independent random variables, the Mellin transform is appropriate when studying products of independent random variables. We collect some properties of the Mellin transform, for details we refer to section 7.8 in \cite{Kawata1972} and \cite{Zolotarev1957}. We note that the use of the Mellin transform for this problem was already considered in \cite{Kiselak2021}. The authors obtain a slightly different expression due to the fact that an inversion formula for the density $h$ was derived and because the density $f_A$ in (\ref{density_stereological_equation}) was studied up to a normalization constant. The identfiability result in this section is new, a sufficient condition for identifiability in this context has not been derived before. 

\begin{definition}[Mellin-Stieltjes transform]
    Given a non-negative random variable $X$, with CDF $F$, the Mellin-Stieltjes transform of $X$ is defined as:
    \[\mathcal{M}_X(s) = \EE(X^{s-1}) = \int_0^\infty x^{s-1}\mathrm{d}F(x),\]
    for $s \in \CC$, whenever the integral is absolutely convergent.
\end{definition}

Note in particular, that whenever $\int x^{c-1}\mathrm{d}F(x) < \infty$ for some $c \in \RR$, then the Mellin transform exists for all $s = c + it$, $t \in \RR$. Hence, existence of the Mellin transform corresponds to the existence of moments of a distribution. Let $\st(\alpha,\beta) := \{s \in \CC: \alpha < \Re(s) < \beta\}$ denote the open strip parallel to the imaginary axis. Analogously, $\st[\alpha,\beta] := \{s \in \CC: \alpha \leq \Re(s) \leq \beta\}$ denotes the closed strip. If we find $\alpha < \beta$ such that the Mellin transform of $X$ converges absolutely on $\st[\alpha, \beta]$, then $\mathcal{M}_X$ is analytic on $\st(\alpha,\beta)$. Taking $\alpha$ as small as possible and $\beta$ as large as possible, this open strip is referred to as the strip of analyticity of $\mathcal{M}_X$. A Mellin transform uniquely determines a distribution in the following sense:

\begin{lemma}[Uniqueness of the Mellin transform]\label{Mellin_uniqueness}
Let $X \sim F_1$ and $Y \sim F_2$. Assume $\mathcal{M}_X$ and $\mathcal{M}_Y$ converge absolutely on $\st[\alpha,\beta]$, $0\leq\alpha < \beta$. If $c \in (\alpha, \beta)$ and $\mathcal{M}_X(c+it) = \mathcal{M}_Y(c+it)$ for all $t\in\RR$ then $F_1 = F_2$. 
\end{lemma}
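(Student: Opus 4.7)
The plan is to reduce the statement to the standard uniqueness theorem for Fourier transforms of finite measures, via the logarithmic change of variables $y = \log x$ which turns multiplication into addition and powers into exponentials.

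First I would push the laws of $X$ and $Y$ forward under $\log$: let $\mu_1,\mu_2$ be the laws of $X,Y$ on $(0,\infty)$ and let $\lambda_1,\lambda_2$ be their images under $y=\log x$, so that for any Borel $B\subseteq\RR$, $\lambda_i(B) = \mu_i(\exp(B))$. Writing $x^{s-1} = e^{(s-1)y}$ in the defining integral gives
\begin{equation*}
\mathcal{M}_X(c+it) = \int_\RR e^{(c-1)y}e^{ity}\,\mathrm{d}\lambda_1(y),
\end{equation*}
and similarly for $\mathcal{M}_Y$. The assumption that $c$ lies strictly inside the strip of absolute convergence guarantees $\int e^{(c-1)y}\,\mathrm{d}\lambda_i(y) = \mathcal{M}_{X}(c) < \infty$, and the same for $Y$. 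This justifies defining the tilted finite positive measures $\mathrm{d}\tilde{\lambda}_i(y) = e^{(c-1)y}\,\mathrm{d}\lambda_i(y)$ on $\RR$.

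With this setup, the hypothesis $\mathcal{M}_X(c+it) = \mathcal{M}_Y(c+it)$ for every $t\in\RR$ says precisely that the Fourier transforms of the finite measures $\tilde{\lambda}_1$ and $\tilde{\lambda}_2$ coincide on all of $\RR$. By the uniqueness theorem for Fourier transforms of finite Borel measures on $\RR$, this forces $\tilde{\lambda}_1 = \tilde{\lambda}_2$. Since the density $e^{(c-1)y}$ is strictly positive, I can recover $\mathrm{d}\lambda_i(y) = e^{-(c-1)y}\,\mathrm{d}\tilde{\lambda}_i(y)$, so $\lambda_1 = \lambda_2$, and then pulling back by $x = e^y$ yields $\mu_1 = \mu_2$, i.e.\ $F_1 = F_2$.

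There is no real obstacle here; the only points that require a little care are verifying that the tilted measures are indeed finite (which is exactly the absolute convergence of the Mellin transform at $s = c$) and citing the correct form of Fourier uniqueness for finite, not necessarily probability, measures. Both are standard and can be quoted from any reference on the Mellin transform, for example section 7.8 of \cite{Kawata1972}, which is already cited just above the statement.
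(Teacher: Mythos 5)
Your argument is correct, but it takes a genuinely different route from the paper. You reduce the statement to Fourier (characteristic function) uniqueness: push the laws forward under $y=\log x$, tilt by the strictly positive weight $e^{(c-1)y}$ (finite total mass precisely because $c$ lies in the strip of absolute convergence), observe that the hypothesis says the Fourier transforms of the two tilted finite measures agree on all of $\RR$, and untilt. A small simplification is available at the step where you worry about uniqueness for finite rather than probability measures: taking $t=0$ in the hypothesis gives $\mathcal{M}_X(c)=\mathcal{M}_Y(c)$, so the two tilted measures have the same total mass and you may normalize and quote the standard uniqueness theorem for characteristic functions. The paper instead invokes the Mellin--Stieltjes inversion theorem (Theorem 7.8.2 in \cite{Kawata1972}) to reconstruct $\frac{1}{2}(F_j(x+)+F_j(x-))$ from the transform along the line $\Re(s)=c$ as a principal-value contour integral, and then uses right-continuity of CDFs to conclude $F_1=F_2$; the identical integrands force identical reconstructions. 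Your approach is more elementary and self-contained, avoiding the inversion formula and its continuity-point bookkeeping; the paper's approach has the side benefit that the same inversion theorem immediately yields the explicit inversion formula (\ref{inversion_formula}) for $H^b$ that is used later in section \ref{section_identifiability}, which your route does not produce. One shared implicit assumption worth noting: both arguments treat the laws as living on $(0,\infty)$ (no atom at $0$), which is harmless here since absolute convergence at the relevant exponents below $1$ rules out mass at the origin and the lemma is only applied to distributions on $(0,\infty)$.
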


The proof is given in Appendix \ref{appendix_proofs}. A similar statement is proven in Theorem 8 in \cite{Butzer1997} for the case that the CDF has a Lebesgue density. Finally, we recall the Mellin convolution theorem. Let $X,Y,Z$ be non-negative random variables, such that $X=YZ$ with $Y$ and $Z$ independent. For any $s \in \CC$ such that $\mathcal{M}_Y(s)$ and $\mathcal{M}_Z(s)$ are finite:
\[\mathcal{M}_X(s) = \EE\left(X^{s-1}\right) = \EE\left((YZ)^{s-1}\right) = \EE\left(Y^{s-1}\right)\EE\left(Z^{s-1}\right) = \mathcal{M}_Y(s)\mathcal{M}_Z(s).\]
Having collected these properties we now state the identifiability result.

\begin{theorem}[Identifiability]\label{identifiability_theorem}
Suppose we are given densities $f_A$, $g_K$ such that $f_A$ can be expressed as in (\ref{density_stereological_equation}) for some CDF $H$.
\begin{enumerate}
\item If $\int_0^\infty z^{-\alpha}g_K(z)\mathrm{d}z < \infty$ for some $\alpha > 0$, then there is only one distribution function $H$ on $(0,\infty)$ satisfying (\ref{density_stereological_equation}).
\item Assume $\int_0^\infty x^{1+\delta}\mathrm{d}H(x) < \infty$, for some $\delta > 0$. Then, there is only one such distribution function $H$ on $(0,\infty)$ satisfying (\ref{density_stereological_equation}).
\end{enumerate}
\end{theorem}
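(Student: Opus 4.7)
The plan is to take Mellin-Stieltjes transforms of both sides, using the product decomposition from Lemma \ref{independent_decomposition_lemma}: if $H$ satisfies (\ref{density_stereological_equation}), then $A = Z\Lambda_b^2$ with $Z \sim G_K$ and $\Lambda_b \sim H^b$ independent, so the Mellin convolution theorem gives
\begin{equation*}
\mathcal{M}_A(s) \;=\; \mathcal{M}_Z(s)\,\mathcal{M}_{\Lambda_b^2}(s)
\end{equation*}
on the common strip of absolute convergence. Since $f_A$ and $g_K$ are given, $\mathcal{M}_A$ and $\mathcal{M}_Z$ are fixed; if two distribution functions $H_1,H_2$ both produce the given $f_A$, then $\mathcal{M}_{\Lambda_{b,1}^2}$ and $\mathcal{M}_{\Lambda_{b,2}^2}$ must coincide wherever $\mathcal{M}_Z$ is nonzero. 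The strategy is to upgrade this to equality on a full vertical line inside a strip of analyticity; Lemma \ref{Mellin_uniqueness} then forces $\Lambda_{b,1}^2$ and $\Lambda_{b,2}^2$ to be equal in distribution, hence $H^b_1 = H^b_2$, and finally $H_1 = H_2$ via the inversion formula in (\ref{H_Hb_relationship}).

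The first task is strip bookkeeping. The standing assumption $\EE(\Lambda) < \infty$ gives $\mathcal{M}_{\Lambda_b^2}(1) = 1$ and $\mathcal{M}_{\Lambda_b^2}(1/2) = 1/\EE(\Lambda) < \infty$, so $\mathcal{M}_{\Lambda_b^2}$ converges absolutely on $\st[1/2,1]$ and is analytic on $\Omega_1 := \st(1/2, 1)$. In part 2 the hypothesis $\EE(\Lambda^{1+\delta}) < \infty$ enlarges this to $\Omega_2 := \st(1/2, 1+\delta/2)$. On the other side, $Z$ has bounded support, so $\mathcal{M}_Z$ converges absolutely on $\st[1, \infty)$; in part 1 the hypothesis $\int_0^\infty z^{-\alpha} g_K(z)\,\mathrm{d}z < \infty$ pushes this leftward to $\st[1-\alpha, \infty)$. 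Consequently, in both parts the intersection of the two convergence strips has nonempty interior and contains a real subinterval: the interval $(\max(1/2, 1-\alpha), 1)$ in part 1, and $(1, 1+\delta/2)$ in part 2.

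To finish, for real $s$ in this subinterval we have $\mathcal{M}_Z(s) = \int_0^{a_{\max}} z^{s-1} g_K(z)\,\mathrm{d}z > 0$ because the integrand is non-negative and not identically zero. Hence both candidates satisfy $\mathcal{M}_{\Lambda_{b,i}^2}(s) = \mathcal{M}_A(s)/\mathcal{M}_Z(s)$ on this real subinterval, so $\mathcal{M}_{\Lambda_{b,1}^2}$ and $\mathcal{M}_{\Lambda_{b,2}^2}$ agree there. Both are analytic on the connected open strip $\Omega_i$, and they agree on a set with limit points inside it, so the identity theorem propagates agreement to all of $\Omega_i$. Applying Lemma \ref{Mellin_uniqueness} along any vertical line $\Re(s) = c$ with $c$ interior to $\Omega_i$ yields $H^b_1 = H^b_2$, and then (\ref{H_Hb_relationship}) gives $H_1 = H_2$.

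The main obstacle I anticipate is not conceptual but the strip bookkeeping: one must check in each part that the strips of convergence of $\mathcal{M}_Z$ and $\mathcal{M}_{\Lambda_b^2}$ have overlapping interiors, and that this overlap lies inside the connected strip of analyticity $\Omega_i$ of the candidate transforms. The positivity of $\mathcal{M}_Z$ on the real axis makes the nonvanishing step transparent, so once the boundary points of the various convergence strips are tracked carefully, the identity theorem and Mellin uniqueness do the rest of the work.
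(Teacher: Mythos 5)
Your proposal is correct and follows essentially the same route as the paper: decompose $A = Z\Lambda_b^2$, apply the Mellin convolution theorem, track the strips of absolute convergence ($\st(\max\{1-\alpha,1/2\},1)$ in part 1, $\st(1,1+\delta/2)$ in part 2), divide by $\mathcal{M}_Z$, and conclude via Lemma \ref{Mellin_uniqueness} and (\ref{H_Hb_relationship}). The only (minor) difference is how the non-vanishing of $\mathcal{M}_Z$ is handled: you exploit strict positivity of $\mathcal{M}_Z$ on a real subinterval and then propagate agreement of the two candidate transforms by the identity theorem on the open strip, whereas the paper works directly on a vertical line, removing the isolated zeros of $\mathcal{M}_Z$ and invoking uniqueness of the analytic continuation of $\mathcal{M}_A/\mathcal{M}_Z$; both are valid, and yours is arguably the cleaner bookkeeping.
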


\begin{proof}
    We first consider statement 1 of the theorem. Let $\Lambda_b \sim H^b$, with $H^b$ as in (\ref{H_Hb_relationship}), and let $\Lambda \sim H$. Let $Z\sim g_K$ and $A\sim f_A$. We first determine on which strips the Mellin transforms of the random variables of interest are analytic. Since $\EE(\Lambda_b^{-1}) = 1/\EE(\Lambda)$ and $\EE(\Lambda_b^0) = 1$ we obtain that $\mathcal{M}_{\Lambda_b}$ is analytic on $\st(0, 1)$. Note that $1/2 < \Re(s) < 1 \iff 0 < \Re(2s-1) < 1 $. As a result: 
    \[\mathcal{M}_{\Lambda_b^2}(s) = \EE\left(\Lambda_b^{2s-2} \right) = \mathcal{M}_{\Lambda_b}(2s-1),\]
    for all $s \in \st(1/2,1)$ and $\mathcal{M}_{\Lambda_b^2}$ is analytic on $\st(1/2,1)$. Choose $\alpha > 0$ such that $\EE(Z^{-\alpha}) < \infty$. Because $g_K$ has bounded support, all non-negative moments of $Z$ exist and therefore $\mathcal{M}_{Z}$ is analytic on $\st(1-\alpha,\infty)$. By Lemma \ref{independent_decomposition_lemma} and the Mellin convolution theorem we obtain:
    \[\mathcal{M}_A(s) = \mathcal{M}_Z(s)\mathcal{M}_{\Lambda_b^2}(s),\]
    for all $s \in \st(1-\alpha,\infty)\cap\st(1/2,1) = \st(\max\{1 - \alpha,1/2\}, 1)$. Moreover, this also means that $\mathcal{M}_A$ is analytic on $\st(\max\{1 - \alpha,1/2\}, 1)$. Let $c \in (\max\{1 - \alpha,1/2\}, 1)$. Define: 
    \[L_Z := \{c+it: t\in\RR,\ \mathcal{M}_Z(c + it) \neq 0 \}, \text{ \ and \ } L := \{c+it: t \in \RR\}.\]
For all $s \in L_Z$ we find: $\mathcal{M}_A(s)/\mathcal{M}_Z(s) = \mathcal{M}_{\Lambda_{b}^2}(s)$. Define $f:L_Z \to \CC$ by $f(s) = \mathcal{M}_A(s)/\mathcal{M}_Z(s)$. Note that $f$ is analytic on $L_Z$, because $s \mapsto \mathcal{M}_{\Lambda_{b}^2}(s)$ is analytic on the line $L$. As a result there is a unique analytic continuation of $f$ to $L$. The uniqueness of this analytic continuation implies: $f(s) = \mathcal{M}_{\Lambda_{b}^2}(s)$, for all $s \in L$. Suppose $\bar{H}$ also satisfies (\ref{density_stereological_equation}), with $\bar{H}^b$ denoting its length-biased version and $\bar{\Lambda}_b \sim \bar{H}^b$. Then, following the same steps as before, we obtain: $f(s) = \mathcal{M}_{\bar{\Lambda}_{b}^2}(s)$ for all $s \in L$. By Lemma \ref{Mellin_uniqueness}, $\Lambda_{b}^2$ and $\bar{\Lambda}_{b}^2$ have the same CDF. Therefore, for all $x \in \RR$: 
\[H^b(x) = \PP(\Lambda_{b}^2 \leq x^2) = \PP(\bar{\Lambda}_b^2 \leq x^2) = \bar{H}^b(x).\] 
By (\ref{H_Hb_relationship}) this also implies $H=\bar{H}$. 

The proof of the second statement of the theorem is analogous, we simply highlight the differences. Let $\delta > 0$ be such that $\EE(\Lambda^{1+\delta}) < \infty$. Note that $\EE(\Lambda_b^{-1}) = 1/\EE(\Lambda)$ and $\EE(\Lambda_b^\delta) = \EE(\Lambda^{\delta + 1})/\EE(\Lambda)$. It then follows that $\mathcal{M}_{\Lambda_b}$ is analytic on $\st(0,1+\delta)$ and $\mathcal{M}_{\Lambda_b^2}$ is analytic on $\st(1/2,1+\delta/2)$. Clearly, $\mathcal{M}_Z$ is analytic on $\st(1,\infty)$. Hence, $\mathcal{M}_A$ is analytic on $\st(1/2,1+\delta/2) \cap \st(1,\infty) = \st(1,1+\delta/2)$. In this case we take $c \in (1,1+\delta/2)$ and the remainder of the proof is as before.
\end{proof}

We obtain as a consequence:

\begin{corollary}\label{identifiability_corollary} In the following cases the distribution function $H$ is identifiable:
\begin{enumerate}
\item The Wicksell corpuscle problem.
\item $G_K^S$ has a bounded density $g_K^S$.
\end{enumerate}
\end{corollary}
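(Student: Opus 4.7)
The plan is to show that both cases satisfy the moment hypothesis of statement 1 of Theorem \ref{identifiability_theorem}, namely that $\int_0^\infty z^{-\alpha} g_K(z)\,\mathrm{d}z < \infty$ for some $\alpha > 0$. Once that bound is verified, identifiability follows immediately from the theorem, so the entire corollary reduces to two concrete integrability checks.

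For the Wicksell problem, I would plug in the explicit density from Example \ref{Wicksell_example}, namely $g_K(z) = 1/(2\pi\sqrt{1 - z/\pi})$ on $(0,\pi)$, and estimate
\[
\int_0^\pi z^{-\alpha} g_K(z)\,\mathrm{d}z = \frac{1}{2\pi}\int_0^\pi \frac{z^{-\alpha}}{\sqrt{1 - z/\pi}}\,\mathrm{d}z.
\]
Near $z = 0$ the integrand behaves like $z^{-\alpha}$, which is integrable whenever $\alpha < 1$; near $z = \pi$ it behaves like $(1 - z/\pi)^{-1/2}$, which is integrable since the singularity is of order $1/2$. Thus any $\alpha \in (0,1)$ works, and statement 1 of Theorem \ref{identifiability_theorem} applies.

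For case 2, the key is to express the negative-moment integral of $g_K$ in terms of $g_K^S$ via the square-root transformation. Since $G_K^S(z) = G_K(z^2)$, differentiation gives $g_K^S(z) = 2z\, g_K(z^2)$, equivalently $g_K(y) = g_K^S(\sqrt{y})/(2\sqrt{y})$. Substituting $u = \sqrt{z}$ in the integral yields
\[
\int_0^\infty z^{-\alpha} g_K(z)\,\mathrm{d}z = \int_0^\infty u^{-2\alpha} g_K^S(u)\,\mathrm{d}u.
\]
Because $K$ is bounded, $g_K^S$ has bounded support contained in $(0,\sqrt{a_{\max}})$, and by hypothesis $g_K^S$ is bounded by some constant $M$. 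Then $\int_0^{\sqrt{a_{\max}}} u^{-2\alpha}\,\mathrm{d}u < \infty$ whenever $2\alpha < 1$, so any $\alpha \in (0, 1/2)$ suffices. Statement 1 of the theorem then gives identifiability.

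The main (mild) obstacle is really only the second case: one needs to recognize that the boundedness assumption on $g_K^S$, combined with the compact support inherited from the boundedness of $K$, is exactly what controls the potential singularity of $g_K$ at $0$ after the change of variables. The Wicksell case is a direct computation, and once the translation between $g_K$ and $g_K^S$ is in place, the second case is equally direct.
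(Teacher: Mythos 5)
Your proposal is correct and follows essentially the same route as the paper: both cases are reduced to verifying condition 1 of Theorem \ref{identifiability_theorem}, using the explicit Wicksell density for case 1 and the substitution $g_K(z) = g_K^S(\sqrt{z})/(2\sqrt{z})$ together with boundedness and compact support of $g_K^S$ for case 2. The only differences are cosmetic: the paper fixes $\alpha = \tfrac{1}{2}$ (computing the Wicksell integral exactly via a Beta density) and $\alpha = \tfrac{1}{4}$, whereas you argue integrability qualitatively for a range of $\alpha$.
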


\begin{proof}
Recall the expression for $g_K$ in Example \ref{Wicksell_example}. Then:
\[\int_0^\pi z^{-\frac{1}{2}}g_K(z)\mathrm{d}z = \int_0^\pi \frac{1}{2\pi \sqrt{z}\sqrt{1 - \frac{z}{\pi}}}\mathrm{d}z = \frac{\sqrt{\pi}}{2}.\]
This integral may be computed by substituting $t=z/\pi$ and recognizing the resulting integral as an integral of a constant times the density of a Beta distribution. Hence, condition 1 of Theorem \ref{identifiability_theorem} is satisfied. If $G_K^S$ has a density $g_K^S$ with $g_K^S \leq B$, then:
\[\int_0^{a_{\text{max}}} z^{-\frac{1}{4}}g_k(z)\mathrm{d}z = \int_0^{\sqrt{a_{\text{max}}}} z^{-\frac{1}{2}}g_k^S(z)\mathrm{d}z \leq B\int_0^{\sqrt{a_{\text{max}}}} z^{-\frac{1}{2}}\mathrm{d}z = 2B(a_{\text{max}})^{\frac{1}{4}}.\]
Therefore, in this case condition 1 of Theorem \ref{identifiability_theorem} is also satisfied.
\end{proof}

Identifiability for the Wicksell problem is a classical result, in this case there is also a well-known explicit inverse relation.

\begin{remark}
The condition in Theorem \ref{identifiability_theorem}: $\int_0^\infty x^{1+\delta}\mathrm{d}H(x) < \infty$, for some $\delta > 0$, is also implied by the assumption $H(M) = 1$ for some $M > 0$. Recall the derivation of (\ref{density_stereological_equation}) in section \ref{section_stereological_equation}, a maximum size of the particles is clearly enforced by that fact that they are contained within the body $Q$. This is a typical assumption in stereological problems.
\end{remark}

Note that the proof of Theorem \ref{identifiability_theorem} also presents (a rather implicit) inversion formula for $H^b$. Assume $H^b$ is continuous. Let $c$ be as in the proof of Theorem \ref{identifiability_theorem}. Since analytic functions only have isolated zeros, $\mathcal{M}_Z(c+it) \neq 0$ for almost all $t\in \RR$. By using the Mellin inversion formula as in the proof of Lemma \ref{Mellin_uniqueness}:
    \begin{align}
        H^b\left(\sqrt{x}\right) = \PP\left(\Lambda_b^2 \leq x\right) = \lim_{T \to \infty} \frac{1}{2\pi i}\int_{c-iT}^{c+iT} -\frac{\mathcal{M}_A(s)}{\mathcal{M}_Z(s)}\frac{x^{-s+1}}{s}\mathrm{d}s, \quad x \geq 0.\label{inversion_formula}
    \end{align}
$H$ can then be retrieved via (\ref{H_Hb_relationship}). 

\section{Estimator for the length-biased particle size distribution}\label{section_estimator}
In this section we propose an estimator for the length-biased size distribution $H^b$. The proposed estimator is inspired by the approach taken in \cite{Jongbloed2001}, for Wicksell's corpuscle problem. Given the random fraction interpretation of Lemma \ref{independent_decomposition_lemma}, first estimating $H^b$ seems a natural intermediate step. We note that biased or weighted distributions frequently appear in stereology, see also section 7.5 in \cite{Ohser2000}. 

The reference particle $K$ is considered to be known and we assume that it satisfies one of the conditions in Theorem \ref{G_properties_theorem} such that $G_K^S$ has a density $g_K^S$. We stress that this also means that we consider $g_K^S$ to be known. While there are very few shapes for which an explicit expression is known for $g_K^S$, in \cite{vdjagt2022} a Monte Carlo simulation scheme is proposed which can be used to approximate such a density arbitrarily closely. To give some insight in how these densities look, see Figure \ref{g_s_estimates} for approximations of these densities for the cube, dodecahedron and tetrahedron. These approximations are obtained by computing a kernel density estimator with boundary correction, based on a sample of size $N=10^7$.

\begin{figure}[b!]
    \centering
    \makebox[\linewidth]{\makebox[\linewidth]{
    \begin{subfigure}[t]{0.5\linewidth}
        \centering
        \includegraphics[width=\linewidth]{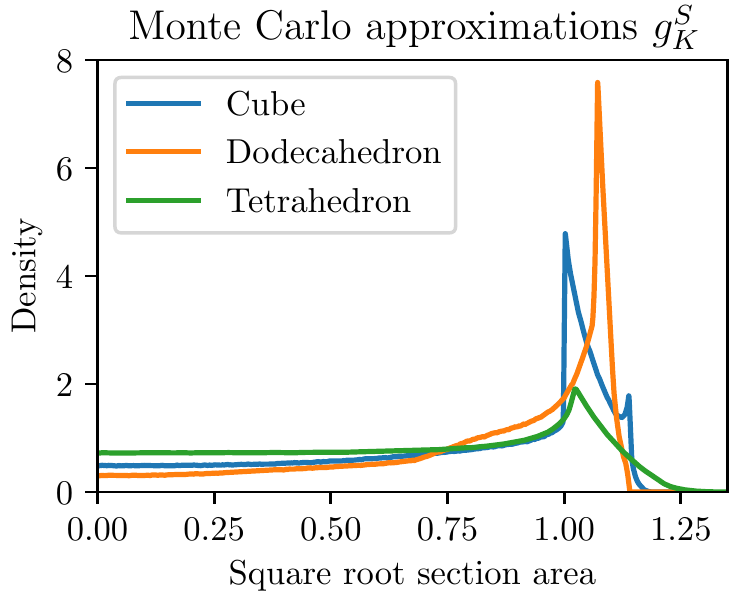}
    \end{subfigure}
    \begin{subfigure}[t]{0.5\linewidth}
        \centering
        \includegraphics[width=\linewidth]{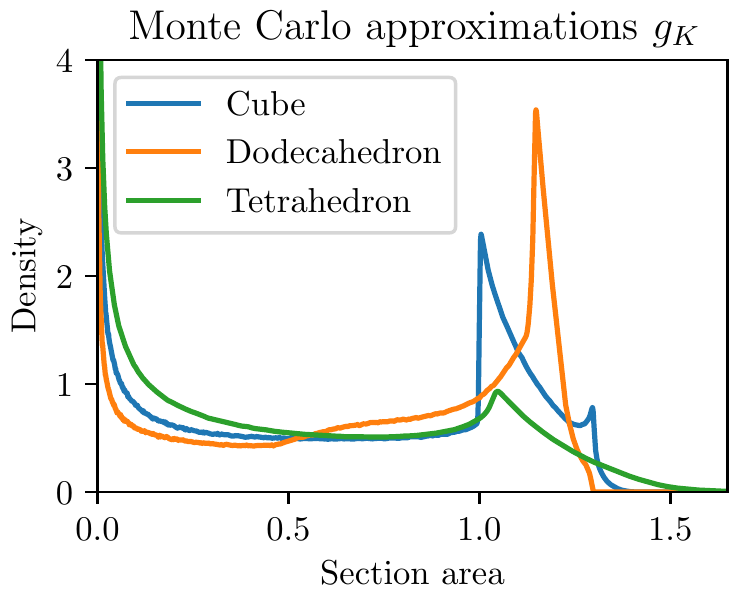}
    \end{subfigure}\hfill}}
    \caption{Left: Monte Carlo approximations of $g_K^S$, for various shapes $K$. Right: Approximations of $g_K$, obtained via $g_K(z) = g_K^S(\sqrt{z})/(2\sqrt{z})$.}
    \label{g_s_estimates}
\end{figure}

Recall the square root transformation and the resulting expression (\ref{FS_cdf}). Because $G_K^S$ has density $g_K^S$, $F_S$ has a density $f_S$ given by:
\begin{equation}
    f_S(s) = \int_0^\infty g_K^S\left(\frac{s}{\lambda}\right)\frac{1}{\lambda}\mathrm{d}H^b(\lambda).\label{sqrt_stereological_equation}
\end{equation}
Keep in mind that $g_K^S$ is supported on $(0,\sqrt{a_{\text{max}}})$, such that the lower bound of the integration region is effectively $s/\sqrt{a_{\text{max}}}$.

Suppose we have a sample of observed section areas: $A_1,\dots,A_n \overset{\mathrm{iid}}{\sim} f_A$. Let $S_i = \sqrt{A_i}$, then $S_1,\dots,S_n \overset{\mathrm{iid}}{\sim} f_S$, with $f_S$ as in (\ref{sqrt_stereological_equation}). Now, let $s_1 < s_2 <\dots < s_n$ be a realization of the order statistics of $S_1,\dots,S_n$. We use (\ref{sqrt_stereological_equation}) to implicitly define an estimator for $H^b$ via nonparametric maximum likelihood. This is achieved by considering a large class of distribution functions for $H^b$. Let $\mathcal{F}^{+}$ be the class of all distribution functions on $(0,\infty)$. Define:
\begin{equation*}
    \mathcal{F}_n^{+} = \{F \in \mathcal{F}^{+}: F \text{ is constant on } [s_{i-1}, s_{i}), i \in \{1,\dots,n\}, \text{with } F(s_0)=0\},
\end{equation*}
for some $0 < s_0 < s_1$. This means that $\mathcal{F}_n^{+}$ contains all piece-wise constant distribution functions with jump locations restricted to the set of observations, the $s_i$'s. Note that as $n \to \infty$ the set of observed $s_i$'s becomes dense in the support of $f_S$ and the class $\mathcal{F}_n^{+}$ grows to the class of all distribution functions with the same support as $f_S$. 

\begin{remark}
If $H^b(M) = 1$ for some $M>0$, then $f_S$ is supported on $(0, M\sqrt{a_{\text{max}}})$. When choosing the size of the reference particle $K$, it is important that $a_{\text{max}} \geq 1$. This is due to the choice of the sieve $\mathcal{F}_n^{+}$. Then, as $n$ tends to infinity $\mathcal{F}_n^{+}$ grows to the class of distribution functions which also contains the true CDF $H^b$, since $M\sqrt{a_{\text{max}}} \geq M$. Taking a very large $K$ means $a_{\text{max}}$ is large, such that $M\sqrt{a_{\text{max}}}$ is much larger than $M$. Then, the $s_i$'s will be quite sparse in $[0, M]$, which is also undesirable. For the sake of interpretability of $H$, recall Remark \ref{reference_particle_interpretation_remark}, we choose a $K$ with volume 1. For the shapes considered in simulations we observed $a_{\text{max}} \geq 1$. 
\end{remark}

For $H^b \in \mathcal{F}_n^{+}$ we define the (scaled by $\frac{1}{n}$) log-likelihood:
\begin{equation}
    L(H^b) := \frac{1}{n}\sum_{i=1}^n \log(f_S(s_i)) = \frac{1}{n}\sum_{i=1}^n \log\left(\int_0^\infty g_K^S\left(\frac{s_i}{\lambda}\right)\frac{1}{\lambda}\mathrm{d}H^b(\lambda) \right). \label{hb_log_likelihood}
\end{equation}
A maximum likelihood estimator (MLE) $\hat{H}_n^b$ for $H^b$ is defined as a maximizer of the log-likelihood $L$, which may be written as:
\begin{align}
    \hat{H}_n^b \in \argmax_{H^b \in \mathcal{F}_n^{+}} \frac{1}{n}\sum_{i=1}^n \log\left(\sum_{j=1}^n g_K^S\left(\frac{s_i}{s_j}\right)\frac{1}{s_j}(H^b(s_j) - H^b(s_{j-1})) \right). \label{hb_mle}
\end{align}
The following theorem shows that this estimator is well-defined, and provides a sufficient condition for uniqueness:

\begin{theorem}[Existence and uniqueness of $\hat{H}_n^b$]\label{mle_uniqueness_theorem}
A maximizer of The log-likelihood $L$ in $ \mathcal{F}_n^{+}$ always exists. The maximizer is unique if the matrix $A = (\alpha_{i,j})$, with $\alpha_{i,j} = g_K^S(s_i/s_j)/s_j$, $i,j \in \{1,\dots,n\}$, is full-rank.

\end{theorem}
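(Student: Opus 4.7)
The plan is to parameterize the sieve $\mathcal{F}_n^{+}$ by the probability simplex $\Delta = \{p \in \RR_{\geq 0}^n : \sum_{j=1}^n p_j = 1\}$ via $p_j = H^b(s_j) - H^b(s_{j-1})$, so that by (\ref{hb_mle}) the objective becomes
\[L(p) = \frac{1}{n}\sum_{i=1}^n \log\bigl((Ap)_i\bigr), \qquad (Ap)_i = \sum_{j=1}^n \alpha_{i,j}\, p_j,\]
with the convention $\log 0 = -\infty$. The whole statement then reduces to standard convex analysis on the compact set $\Delta$.

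For existence, I would observe that each composition $p \mapsto \log((Ap)_i)$ is concave and upper-semicontinuous as an extended-real-valued function, since $p \mapsto (Ap)_i$ is linear with nonnegative coefficients and $\log$ is concave and upper-semicontinuous on $[0, \infty)$. Hence $L$ is concave and upper-semicontinuous on $\Delta$, and therefore attains its supremum as long as $L \not\equiv -\infty$. The latter is verified by exhibiting one $p$ for which $(Ap)_i > 0$ for every $i$: taking all mass at $s_n$ works because $s_i/s_n \leq 1 < \sqrt{a_{\max}}$ lies in the support of $g_K^S$, forcing $\alpha_{i,n} > 0$ for each $i$.

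For uniqueness under the full-rank hypothesis, suppose $p^{(1)} \neq p^{(2)}$ were two distinct maximizers. Since both attain the finite value $\max L > -\infty$, necessarily $(Ap^{(k)})_i > 0$ for all $i$ and $k \in \{1,2\}$. Full rank of the $n \times n$ matrix $A$ makes it injective, so $Ap^{(1)} \neq Ap^{(2)}$ and there exists an index $i_0$ with $(Ap^{(1)})_{i_0} \neq (Ap^{(2)})_{i_0}$. Strict concavity of $\log$ on $(0,\infty)$ at coordinate $i_0$, combined with plain concavity at the other coordinates, then gives
\[L\!\left(\tfrac{p^{(1)} + p^{(2)}}{2}\right) > \tfrac{1}{2}\bigl(L(p^{(1)}) + L(p^{(2)})\bigr) = \max_{p \in \Delta} L(p),\]
contradicting maximality. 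The argument is essentially standard convex analysis once one notices that $L$ factors through the linear map $p \mapsto Ap$; the one minor subtlety is verifying $L \not\equiv -\infty$, which relies on the choice of $K$ with $a_{\max} \geq 1$ flagged in the preceding remark.
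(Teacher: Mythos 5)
Your proof is correct, and the uniqueness part takes a genuinely different route from the paper. The paper parameterizes $\mathcal{F}_n^{+}$ by the cumulative values $\beta_j = H^b(s_j)$ on the set $\mathcal{C}=\{0\le\beta_1\le\dots\le\beta_n\le 1\}$, gets existence from compactness and continuity just as you do, and then proves uniqueness by computing the gradient and Hessian of the log-likelihood explicitly, showing $\gamma^\mathsf{T}H(\beta)\gamma = -\frac1n\sum_i\bigl(\sum_j\alpha_{i,j}(\gamma_j-\gamma_{j-1})\bigr)^2/\bigl(\sum_q\alpha_{i,q}(\beta_q-\beta_{q-1})\bigr)^2$, so that the Hessian is negative definite precisely when $A$ is full-rank; strict concavity then yields uniqueness. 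You instead work on the probability simplex and argue directly: two distinct maximizers with finite value would have $Ap^{(1)}\neq Ap^{(2)}$ by injectivity of the square full-rank matrix $A$, and strict concavity of $\log$ at the differing coordinate makes the midpoint strictly better. Your argument is more elementary, avoids derivatives altogether (and hence any differentiability concerns at the boundary of the feasible set), while the paper's calculation buys the exact ``if and only if'' characterization of strict concavity and produces the gradient and Hessian formulas that are reused later for the ICM/EM algorithms. Two minor points, neither a real gap: your verification that $L\not\equiv-\infty$ via all mass at $s_n$ leans on $1<\sqrt{a_{\max}}$ and on pointwise positivity of $g_K^S$ at a support point, neither of which is guaranteed in general (the paper only arranges $a_{\max}\ge 1$); this step is dispensable, since upper semicontinuity on a compact set already yields a maximizer, and under the full-rank hypothesis no row of $A$ vanishes, so the uniform vector has finite likelihood. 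Also, elements of $\mathcal{F}_n^{+}$ may place mass beyond $s_n$, so the increments form a sub-probability vector rather than a point of $\Delta$; restricting to the simplex is harmless because $\alpha_{i,j}\ge 0$ (and under full rank the last column of $A$ is nonzero, forcing any maximizer to put all mass at the $s_j$'s), but this identification deserves a sentence — the paper's choice of $\mathcal{C}$ with $\beta_n\le 1$ sidesteps it only partially.
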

\begin{proof}
For $H^b \in \mathcal{F}_n^{+}$ define: $\beta_j = H^b(s_j)$ and write $\beta = (\beta_1,\beta_2,\dots,\beta_n)^\mathsf{T}$. Consider the closed convex set:
\begin{equation}
    \mathcal{C} := \{\beta \in \RR^n: 0\leq \beta_1 \leq \beta_2 \leq \dots \leq \beta_n \leq 1\}.\label{bounded_cone_eq}
\end{equation}
The maximization problem (\ref{hb_mle}) is equivalent to maximizing $l:\mathcal{C} \to \RR \cup \{-\infty\}$ with $l$ given by:
\begin{equation}
    l(\beta) = \frac{1}{n}\sum_{i=1}^n \log\left(\sum_{j=1}^n \alpha_{i,j}(\beta_j - \beta_{j-1}) \right),\label{optimization_problem}
\end{equation}
where $\alpha_{i,j} = g_K^S(s_i/s_j)/s_j$ and $\beta_0=0$. The set $\mathcal{C}$ is closed and bounded, and therefore compact. Because of the continuity of $l$ on $\mathcal{C}$, it has a maximum. We now show that $l$ is strictly concave if and only if $A = (\alpha_{i,j})$ is full-rank. Strict concavity implies uniqueness of the maximum as well as the maximizer. Fix $\beta \in \mathcal{C}$ such that $l(\beta)>-\infty$. Let $j, k \in \{1,\dots,n\}$, computing the partial derivatives and Hessian of $l$ yields:
\begin{align}
    \frac{\partial}{\partial \beta_j}l(\beta) &= \frac{1}{n}\sum_{i=1}^n\frac{\alpha_{i,j} - \alpha_{i,j+1}}{\sum_{q=1}^n \alpha_{i,q}(\beta_q - \beta_{q-1})} \label{loglikelihood_derivatives1}\\
    \frac{\partial^2}{\partial \beta_j \partial \beta_k}l(\beta) &= -\frac{1}{n}\sum_{i=1}^n\frac{\left(\alpha_{i,j} - \alpha_{i,j+1}\right)\left(\alpha_{i,k} - \alpha_{i,k+1}\right)}{\left(\sum_{q=1}^n \alpha_{i,q}(\beta_q - \beta_{q-1})\right)^2} =:H_{j,k}(\beta).\label{loglikelihood_derivatives2}
\end{align}
Here, we have set $\alpha_{i,n+1} = 0$ for all $i \in \{1,\dots,n\}$. Since $l(\beta)>-\infty$, there are no divisions by zero in (\ref{loglikelihood_derivatives1}) and (\ref{loglikelihood_derivatives2}). Note that the following holds for $k \in \{1,\dots,n\}$:
\[\sum_{j=1}^n \alpha_{k,j}(\beta_j - \beta_{j-1}) = \sum_{j=1}^n \alpha_{k,j}\beta_j - \sum_{j = 0}^{n-1}\alpha_{k,j+1}\beta_j = \sum_{j=1}^n (\alpha_{k,j}-\alpha_{k,j+1})\beta_j. \]
Using this fact we show that the Hessian of $l$ is negative definite if and only if $A$ is full-rank. Let $\gamma \in \RR^n$ and set $\gamma_0 = 0$, then:
\begin{align}
\gamma^\mathsf{T} H(\beta)\gamma &= \sum_{j=1}^n\sum_{k=1}^n H_{j,k}(\beta)\gamma_j\gamma_k \nonumber \\
&= -\frac{1}{n}\sum_{i=1}^n\sum_{j=1}^n \sum_{k=1}^n  \frac{\left(\alpha_{i,j} - \alpha_{i,j+1}\right)\left(\alpha_{i,k} - \alpha_{i,k+1}\right)\gamma_j \gamma_k}{\left(\sum_{q=1}^n \alpha_{i,q}(\beta_q - \beta_{q-1})\right)^2} \nonumber \\
&= -\frac{1}{n}\sum_{i=1}^n \frac{\left(\sum_{j=1}^n \alpha_{i,j}(\gamma_j - \gamma_{j-1})\right)^2}{\left(\sum_{q=1}^n \alpha_{i,q}(\beta_q - \beta_{q-1})\right)^2} \nonumber 
\end{align}
Clearly, $H(\beta)$ is negative semidefinite. Note that the following holds:
\begin{equation}
\gamma^\mathsf{T} H(\beta)\gamma = 0 \iff \sum_{j=1}^n \alpha_{i,j}(\gamma_j - \gamma_{j-1}) = 0, \text{ \ for all } i \in \{1,\dots,n\}.\label{mle_uniqueness_condition}
\end{equation}
Define $x \in \RR^n$ via $x_j = \gamma_j - \gamma_{j-1}$, $j \in \{1,\dots,n\}$. Consider the matrix $A = (\alpha_{i,j})$, then the RHS of (\ref{mle_uniqueness_condition}) may be written as $Ax = 0$. Since $\gamma_0 = 0$: $\gamma=0 \iff x=0$. Therefore, the Hessian is negative definite if and only if $Ax = 0 \iff x = 0$, which corresponds to $A$ being full-rank.
\end{proof}

Recall that $g_K^S$ is supported on $(0,\sqrt{a_\text{max}})$. Suppose we choose the reference particle such that $\sqrt{a_{\text{max}}} = 1 + \epsilon$ for some small $\epsilon > 0$. Then, $g_{K}^S(1) > 0$ ensuring that the diagonal of $A$ contains positive entries. Whenever $s_i/s_j > 1+\epsilon$ for all $i>j$, $A$ is an upper triangular matrix, because all entries below the diagonal are zero. It is well-known that such matrices are of full-rank. If $\epsilon > 0$ is chosen sufficiently small, then with high probability $s_{j+1}/s_j > 1+\epsilon$ for all $j \in \{1,\dots,n\}$, such that the MLE is unique with high probability. For the sake of convenience we will refer to $\hat{H}_n^b$ as the MLE, even though we cannot always guarantee uniqueness. Note especially for the consistency result in the next section that consistency of the MLE should be interpreted as consistency of any sequence of MLE's. 

From the proof of Theorem \ref{mle_uniqueness_theorem} it is clear that $\hat{H}_n^b$ may be computed by maximizing $l$. Because $l$ is a concave function, computing $\hat{H}_n^b$ can be done efficiently as we will discuss in section \ref{section_algorithms}.

\section{Consistency of the maximum likelihood estimator}\label{section_consistency}

In this section we show that the MLE $\hat{H}_n^b$ (\ref{hb_mle}) for $H^b$ is uniformly strongly consistent. In order to prove this, we transform the problem into a deconvolution problem. Deconvolution problems have been studied before quite extensively, see for example \cite{Groeneboom2013} and \cite{Groeneboom1992}. We use some results on estimators in deconvolution problems to show consistency of the MLE. In deconvolution problems it is typical that assumptions are made on the so-called noise kernel to ensure consistency. For this problem this translates into assumptions on the density $g_K^S$.

We start by rewriting the problem of estimating $H^b$ into a deconvolution problem. Recall Lemma \ref{independent_decomposition_lemma}, for $S \sim f_S$, $\sqrt{Z} \sim g_K^S$ and $\Lambda_b \sim H^b$ we have: $S =^d \sqrt{Z} \Lambda_b$, with $\sqrt{Z}$ and $\Lambda_b$ independent. Let us now perform a log-transformation, define: $Y = \log(S)$, $\epsilon = \log(\sqrt{Z})$ and $X = \log(\Lambda_b)$. The densities of $Y$ and $\epsilon$ are related to those of $S$ and $\sqrt{Z}$ by: $f_Y(y) = f_S(e^y)e^y$, $f_\epsilon(z) = g_K^S(e^z)e^z$. The distribution function of $X$ is given by $F_X(x) = H^b(e^x)$. We then obtain:
\begin{equation*}
    Y \overset{d}{=} X + \epsilon,
\end{equation*}
with $X$ and $\epsilon$ independent. Note that $f_Y$ is the convolution of $f_{\varepsilon}$ and $F_X$:
\begin{equation}
    f_Y(y) = \int_{-\infty}^\infty f_{\varepsilon}(y - x)\mathrm{d}F_X(x) =: \left(f_{\varepsilon} * \mathrm{d}F_X\right)(y). \label{convolution_density_fy}
\end{equation} 
In this setting, $F_X$ is the distribution function of interest. We do not have direct observations from $F_X$, there is additive noise from the known distribution of $\epsilon$. Let $\mathcal{F}$ be the class of all distribution functions on $\RR$. Define:
\begin{equation*}
    \mathcal{F}_n = \{F \in \mathcal{F}: F \text{ is constant on } [y_{i-1}, y_{i}), \text{ for } i \in \{1,\dots,n\}, \text{with } F(y_0)=0\}.
\end{equation*}
The observed order statistics $s_1,\dots,s_n$ are transformed as well: $y_i = \log(s_i)$, $i \in \{0,1,\dots,n\}$. We proceed similarly as before, the log-likelihood may be written as:
\begin{equation*}
    \Tilde{L}(F) = \frac{1}{n}\sum_{i=1}^n \log\left(f_Y(y_i)\right) = \frac{1}{n}\sum_{i=1}^n \log\left(\int_{-\infty}^\infty f_{\epsilon}\left(y_i - x \right)\mathrm{d}F_X(x) \right).
\end{equation*}
A maximum likelihood estimator $\hat{F}_n$ for $F_X$ is defined as:
\begin{equation}
    \hat{F}_n \in \argmax_{F_X \in \mathcal{F}_n} \Tilde{L}(F_X). \label{mle_deconvolution}
\end{equation}
We now show that we may assume $\hat{H}_n^b(x) = \hat{F}_n(\log(x))$. The likelihoods of the two problems are related as follows. Let $F_X \in \mathcal{F}_n$, define $H^b(x) = F_X(\log(x))$ such that $H^b \in \mathcal{F}_n^{+}$. Then:
\begin{align*}
    \Tilde{L}(F_X) &= \frac{1}{n}\sum_{i=1}^n \log\left(\sum_{j=1}^n f_{\epsilon}\left(\log(s_i) - \log(s_j) \right)(F_X(\log(s_j)) - F_X(\log(s_{j-1})))\right) \\
    &= \frac{1}{n}\sum_{i=1}^n \log\left(\sum_{j=1}^n g_K^S\left(\frac{s_i}{s_j}\right)\frac{s_i}{s_j}(H^b(s_j) - H^b(s_{j-1}))\right) \\
    &= L(H^b) + \frac{1}{n}\sum_{i=1}^n \log(s_i).
\end{align*}
If we find a distribution function which provides a better likelihood in one of the problems, we immediately obtain a distribution function which provides the same improvement in likelihood in the other problem. So indeed, there exist MLE's which are related via: $\hat{H}_n^b(x) = \hat{F}_n(\log(x))$. The estimator $\hat{F}_n$ was studied in \cite{Groeneboom2013} and shown to be strongly uniformly consistent under some conditions on $f_{\varepsilon}$. This result may be used to show strong uniform consistency of $\hat{H}_n^b$, since:
\begin{equation*}
    \sup_{x > 0}|\hat{H}_n^b(x) - H^b(x)| = \sup_{x> 0}|\hat{F}_n(\log(x)) - F_X(\log(x))| = \sup_{x \in \RR}|\hat{F}_n(x) - F_X(x)|.
\end{equation*}

Let us now specify the assumptions we require for $f_{\epsilon}$. We assume it belongs to the class $\mathcal{G}$ of upper semicontinuous functions that are of bounded variation on a compact interval and monotone outside this interval. Let $V_a^b(f)$ denote the total variation of the function $f$ on the interval $[a, b]$, $a < b$. The class $\mathcal{G}$ may be written as:
\begin{align*}
\mathcal{G} &=  \bigg\{ g : \RR \to [0,\infty): g \text{ is an upper semicontinuous density such that } \exists M > 0\\
& \qquad \text{ with } V_{-M}^M (g) < \infty \text{ and } g \text{ is monotone on } (-\infty,-M] \text{ and } [M,\infty) \bigg\}.
\end{align*}
This corresponds with the following assumptions on $g_K^S$:

\begin{lemma}\label{gs_condition_lemma}
Assume that $g_K^S$ is upper semicontinuous and of bounded variation on its support. Then, the density $f_{\epsilon}:\RR \to [0,\infty)$ given by $f_{\epsilon}(z) = g_K^S(e^z)e^z$ belongs to $\mathcal{G}$.
\end{lemma}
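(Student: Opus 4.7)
The plan is to verify the three defining properties of $\mathcal{G}$ in turn: upper semicontinuity (combined with the density property), bounded variation on some compact interval $[-M,M]$, and monotonicity on each of the two tails.

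First I would check that $f_\epsilon$ is an upper semicontinuous density. The map $z \mapsto e^z$ is continuous and strictly positive, so upper semicontinuity of $g_K^S$ transfers to $g_K^S(e^z)$, and multiplying by the continuous positive factor $e^z$ preserves upper semicontinuity of nonnegative functions. That it integrates to $1$ is immediate by the substitution $u = e^z$: $\int_\RR g_K^S(e^z) e^z \mathrm{d}z = \int_0^\infty g_K^S(u)\mathrm{d}u = 1$.

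Next I would choose $M$ carefully. Recall that $g_K^S$ is supported in $(0,\sqrt{a_{\text{max}}})$, and by point 4 of Theorem \ref{G_properties_theorem} (initial monotonicity) $g_K^S$ is non-decreasing on $(0,\tau_K)$ for some $\tau_K>0$. Pick any $M > 0$ with $e^M > \sqrt{a_{\text{max}}}$ and $e^{-M} < \tau_K$. Then:
\begin{itemize}
\item For $z \geq M$ we have $e^z > \sqrt{a_{\text{max}}}$, so $g_K^S(e^z) = 0$ and hence $f_\epsilon \equiv 0$ on $[M,\infty)$, which is trivially monotone.
\item For $z \leq -M$ we have $0 < e^z < \tau_K$, so on this range $g_K^S(e^z)$ is non-decreasing in $z$ (composition of a non-decreasing function with the increasing map $e^z$); since $e^z$ itself is non-decreasing and non-negative, the product $f_\epsilon(z) = g_K^S(e^z)e^z$ is non-decreasing on $(-\infty,-M]$.
\end{itemize}

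It remains to check $V_{-M}^{M}(f_\epsilon) < \infty$. By assumption, $g_K^S$ has bounded variation on its support $(0,\sqrt{a_{\text{max}}})$, and the substitution $u=e^z$ is a smooth increasing bijection between $[-M,M]$ and $[e^{-M},e^M]$, so $z \mapsto g_K^S(e^z)$ has finite total variation on $[-M,M]$ (the total variation is invariant under strictly monotone continuous reparametrization). The factor $z\mapsto e^z$ is $C^1$ and hence of bounded variation on the compact interval $[-M,M]$, and the product of two bounded BV functions on a compact interval is again BV. Therefore $V_{-M}^{M}(f_\epsilon) < \infty$, and all three conditions defining $\mathcal{G}$ are met.

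The only mildly subtle step is the tail at $-\infty$, which is where initial monotonicity (point 4 of Theorem \ref{G_properties_theorem}) enters crucially; the tail at $+\infty$ is a triviality once one remembers that $g_K^S$ has bounded support. I do not anticipate a real obstacle here — the result is essentially a bookkeeping check that the logarithmic change of variables plays nicely with the qualitative properties of $g_K^S$ already established.
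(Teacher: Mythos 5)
Your proposal is correct and matches the paper's proof in all essentials: the same choice of $M$ (so that $e^{-M}<\tau_K$ and $e^M>\sqrt{a_{\text{max}}}$), the same use of initial monotonicity for the left tail and of the bounded support for the right tail, and the same upper-semicontinuity argument. The only cosmetic difference is the bounded-variation step, where you invoke the standard facts that total variation is invariant under the monotone reparametrization $u=e^z$ and that a product of bounded BV functions is BV, whereas the paper carries out the equivalent telescoping partition estimate explicitly; both arguments silently absorb the (harmless, since $g_K^S$ is bounded) jump of $g_K^S$ at the endpoint $\sqrt{a_{\text{max}}}$ of its support.
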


The proof of this lemma can be found in Appendix \ref{appendix_proofs}. We now collect some lemmas to obtain a consistency result for $\hat{F}_n$. The following result can be found in \cite{Groeneboom2013}, as Corollary 1:

\begin{lemma}\label{lemma_MLE_L1_consistency}
    Let $\hat{F}_n$ be the MLE for $F_X$ defined in (\ref{mle_deconvolution}). Assume $f_{\epsilon} \in \mathcal{G}$. Set $\hat{f}_n := f_{\epsilon} * \mathrm{d}\hat{F}_n$, $f_Y := f_{\epsilon} * \mathrm{d}F_X$, then almost surely: 
    \[\lim_{n \to \infty} \Vert\hat{f}_n -f_Y\Vert_{L_1} = \lim_{n \to \infty}\int \left|\hat{f}_n(s) - f_Y(s)\right|\mathrm{d}s = 0.\]
\end{lemma}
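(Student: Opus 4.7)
The plan is to follow the two-step template for consistency of nonparametric maximum likelihood estimators in mixture/deconvolution problems: first establish $L_1$-precompactness of the sequence of density estimators $\{\hat{f}_n\}$, and then identify $f_Y$ as the only possible $L_1$-limit via a likelihood-comparison argument.

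For the compactness step, I would exploit the structural assumptions packaged in $\mathcal{G}$. Because $f_\epsilon$ is of bounded variation on a compact interval and monotone outside it, it is uniformly bounded, so $\hat{f}_n = f_\epsilon * d\hat{F}_n \leq \|f_\epsilon\|_\infty$ for every $n$. Bounded variation is preserved under convolution with a probability measure, and the monotonicity of $f_\epsilon$ on the tails transfers to the convolution up to a shift determined by the support of $\hat{F}_n$. Tightness of $\{\hat{F}_n\}$ follows because its support lies in the observed $y_i = \log s_i$, which remain in a bounded set eventually (the enclosing body in the stereological model bounds the observed areas from above, and a strong law of large numbers controls the lower extreme). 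Helly's selection principle then yields subsequential pointwise limits, and Scheff\'e's lemma upgrades this to $L_1$-convergence since each limit is again a density.

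For identification, fix a subsequence along which $\hat{f}_{n_k} \to \tilde{f}$ in $L_1$. The defining MLE inequality, applied to a carefully chosen competitor $\tilde{F}_n \in \mathcal{F}_n$ (for example the uniform distribution on the $y_i$'s, so that $f_\epsilon * d\tilde{F}_n$ is a kernel-type smoothing of the empirical distribution known to estimate $f_Y$ consistently) gives
\begin{equation*}
\frac{1}{n} \sum_{i=1}^n \log \frac{\hat{f}_n(Y_i)}{f_Y(Y_i)} \geq \frac{1}{n} \sum_{i=1}^n \log \frac{(f_\epsilon * d\tilde{F}_n)(Y_i)}{f_Y(Y_i)}.
\end{equation*}
Standard estimates show the right-hand side tends to zero almost surely. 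A uniform law of large numbers over the class of log-ratios passes the left-hand side, along the subsequence, to $\EE \log(\tilde{f}(Y)/f_Y(Y))$, which must therefore be $\geq 0$. Simultaneously, Jensen's inequality applied to the true density $f_Y$ forces the same quantity to be $\leq 0$, yielding equality and hence $\tilde{f} = f_Y$ almost everywhere. Since every subsequential $L_1$-limit of $\hat{f}_n$ coincides with $f_Y$, the full sequence converges to $f_Y$ in $L_1$.

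The main obstacle will be the empirical-process step: one must show that $n^{-1} \sum_{i=1}^n \log(\hat{f}_n(Y_i)/f_Y(Y_i))$ is uniformly close to its expectation despite $\hat{f}_n$ being random and data-dependent. This reduces to a Glivenko--Cantelli property for the class $\{\log(f_\epsilon * dF): F \in \mathcal{F}\}$, which in turn requires finite bracketing entropy. The conditions packaged in $\mathcal{G}$ — upper semicontinuity together with the bounded-variation-plus-monotone-tails structure — are exactly what is needed to control this entropy, and this is the route followed in \cite{Groeneboom2013}.
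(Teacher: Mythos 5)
First, a point of reference: the paper itself does not prove this lemma at all --- it is quoted as Corollary 1 of \cite{Groeneboom2013}, so there is no internal proof to compare against and your sketch has to stand on its own. As written it contains genuine gaps. The most serious one is the choice of competitor in the likelihood comparison. If $\tilde F_n$ is the uniform distribution on the $y_i$'s, then $(f_\epsilon * \mathrm{d}\tilde F_n)(y) = n^{-1}\sum_{i=1}^n f_\epsilon(y-Y_i)$ is a kernel-type estimator with the \emph{fixed} kernel $f_\epsilon$ and no shrinking bandwidth; it converges to $f_\epsilon * \mathrm{d}F_Y$, not to $f_Y$. Consequently the right-hand side of your MLE inequality tends to $\EE\log\bigl((f_\epsilon*\mathrm{d}F_Y)(Y)/f_Y(Y)\bigr)$, which by Jensen is $\leq 0$ and in general strictly negative, so you cannot conclude that the subsequential limit satisfies $\EE\log(\tilde f(Y)/f_Y(Y))\geq 0$. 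The natural repair is to take $\tilde F_n$ to be a discretization of $F_X$ itself with jumps restricted to the $y_i$'s (admissible in $\mathcal{F}_n$ and converging weakly to $F_X$), and then invoke the convolution-continuity statement (Lemma \ref{lemma_convolution_continuity}) to drive the right-hand side to zero; this requires additional care with the log but is the standard device.

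The second gap is the step you yourself flag as the obstacle, and the proposed route does not survive scrutiny: a uniform law of large numbers for the class $\{\log(f_\epsilon * \mathrm{d}F/f_Y): F\in\mathcal{F}\}$ is not available, because $\log(f_\epsilon*\mathrm{d}F)$ is unbounded below (the mixture may vanish where $f_Y$ does not) and the envelope is not integrable, so bracketing-entropy arguments do not apply to the log-ratios directly. The proofs behind Corollary 1 of \cite{Groeneboom2013} instead work with a bounded transform such as $(f-f_Y)/(f+f_Y)$ or $2f/(f+f_Y)$ and establish Hellinger consistency, which implies the $L_1$ statement; the structure of $\mathcal{G}$ (bounded variation on a compact interval plus monotone tails) is used to verify a Glivenko--Cantelli property for that \emph{bounded} class. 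Finally, your precompactness step leans on tightness of $\hat F_n$ justified by "the enclosing body bounds the observed areas", which is a model assumption not present in the lemma (the paper's own simulations use exponential and lognormal $H$, with unbounded support), and the $y_i=\log s_i$ are in any case unbounded below because arbitrarily small section areas occur; so tightness of $\hat F_n$ is not automatic. The standard arguments either avoid precompactness of $\{\hat f_n\}$ altogether by deriving the $L_1$/Hellinger convergence directly from the MLE inequality, or apply Helly's principle allowing possibly defective limits and rule out escaping mass afterwards.
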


The following lemma is a generalization of Lemma 3 in \cite{Groeneboom2013}. The proof is given in Appendix \ref{appendix_proofs}.

\begin{lemma}\label{lemma_convolution_continuity}
    Let $f_{\varepsilon}$ be a Lebesgue density on $\RR$. Let $(F_n)_{n \geq 1}$ be a sequence of distribution functions on $\RR$, converging weakly to a distribution function $F_X$. Then, for $f_n := f_{\epsilon} * \mathrm{d}F_n$ and $f_Y := f_{\epsilon} * \mathrm{d}F_X$:
    \[\lim_{n \to \infty} \Vert f_n -f_Y\Vert_{L_1} = 0.\]
\end{lemma}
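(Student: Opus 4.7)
The plan is to reduce to the case where $f_{\varepsilon}$ is continuous and bounded, in which case pointwise convergence of $f_n$ to $f_Y$ is an immediate consequence of weak convergence, and $L^1$ convergence then follows from Scheff\'e's lemma. The main obstacle is that $f_{\varepsilon}$ is only assumed to be a measurable density, so the map $x \mapsto f_{\varepsilon}(y - x)$ is not a legitimate test function for weak convergence. I would bypass this by mollification.

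First, for each $\delta > 0$ let $\phi_\delta$ be a smooth, nonnegative, compactly supported mollifier with $\int \phi_\delta = 1$, and set $f_{\varepsilon}^\delta := f_{\varepsilon} * \phi_\delta$. Then $f_{\varepsilon}^\delta$ is a bounded, continuous probability density on $\RR$ with $\|f_{\varepsilon}^\delta - f_{\varepsilon}\|_{L^1} \to 0$ as $\delta \downarrow 0$. Define $f_n^\delta := f_{\varepsilon}^\delta * \mathrm{d}F_n$ and $f_Y^\delta := f_{\varepsilon}^\delta * \mathrm{d}F_X$. By Fubini (the version of Young's inequality for convolution of an $L^1$ function with a probability measure),
\[
\|f_n - f_n^\delta\|_{L^1} \leq \|f_{\varepsilon} - f_{\varepsilon}^\delta\|_{L^1}
\quad \text{and} \quad
\|f_Y - f_Y^\delta\|_{L^1} \leq \|f_{\varepsilon} - f_{\varepsilon}^\delta\|_{L^1}.
\]

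Next, I would fix $\delta > 0$ and establish $\|f_n^\delta - f_Y^\delta\|_{L^1} \to 0$ as $n \to \infty$. For every $y \in \RR$ the map $x \mapsto f_{\varepsilon}^\delta(y - x)$ is bounded and continuous, so weak convergence $F_n \Rightarrow F_X$ yields $f_n^\delta(y) \to f_Y^\delta(y)$ pointwise for each $y$. Since both $f_n^\delta$ and $f_Y^\delta$ are probability densities (they integrate to $1$ by Fubini), Scheff\'e's lemma upgrades this pointwise convergence to $L^1$ convergence.

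Finally, a triangle inequality gives
\[
\limsup_{n \to \infty} \|f_n - f_Y\|_{L^1} \;\leq\; 2 \, \|f_{\varepsilon} - f_{\varepsilon}^\delta\|_{L^1},
\]
and letting $\delta \downarrow 0$ forces the left-hand side to vanish, concluding the proof. I expect the only slightly delicate point to be the need to simultaneously control $\|f_n - f_n^\delta\|_{L^1}$ uniformly in $n$; this is what makes the mollification argument work, and it rests on the crucial fact that $\mathrm{d}F_n$ has total mass exactly $1$ for every $n$.
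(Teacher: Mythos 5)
Your proof is correct and follows essentially the same route as the paper's: approximate $f_{\varepsilon}$ in $L^1$ by a bounded continuous probability density, bound the two error terms uniformly in $n$ via Fubini (using that each $F_n$ has total mass one), and handle the continuous kernel via weak convergence plus Scheff\'e's lemma. The only difference is cosmetic: you build the continuous approximant by mollification, while the paper uses density of compactly supported continuous functions in $L^1$ followed by renormalization (its Lemma \ref{continuous_density_approximation}).
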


We now state the following theorem, which closely follows the proof of Theorem 3 in \cite{Groeneboom2013}.

\begin{theorem}[Consistency of $\hat{F}_n$]
Let $f_{\epsilon} \in \mathcal{G}$. Assume that the deconvolution problem with this $f_\epsilon$ is identifiable. Then, with probability one, $\lim_{n\to \infty} \hat{F}_n(x) = F_X(x)$ for each $x$ where $F_X$ is continuous. If $F_X$ is continuous then with probability one:
\[\lim_{n\to \infty} \left\Vert\hat{F}_n - F_X\right\Vert_{\infty} = 0.\]
\end{theorem}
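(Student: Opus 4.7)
The plan is to follow the standard subsequence argument used in consistency proofs for deconvolution problems: show that every subsequence of $(\hat{F}_n)$ has a further weakly convergent subsequence whose limit must coincide with $F_X$, which then implies the whole sequence converges weakly to $F_X$ almost surely; when $F_X$ is continuous, P\'olya's theorem promotes pointwise convergence at continuity points to uniform convergence.

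Concretely, I would first apply Helly's selection theorem to an arbitrary subsequence of $(\hat{F}_n)$ to extract a further subsequence $(\hat{F}_{n_k})$ converging vaguely to some non-decreasing right-continuous sub-distribution $F^*$ on $\RR$ with $F^*(-\infty) \geq 0$ and $F^*(\infty) \leq 1$. The goal is to identify $F^*$ with $F_X$. By Lemma \ref{lemma_MLE_L1_consistency}, almost surely $\hat{f}_{n_k} := f_\epsilon * \mathrm{d}\hat{F}_{n_k} \to f_Y = f_\epsilon * \mathrm{d}F_X$ in $L_1$. Once we know $F^*$ is a bona fide distribution function, Lemma \ref{lemma_convolution_continuity} gives $\hat{f}_{n_k} \to f_\epsilon * \mathrm{d}F^*$ in $L_1$, whence $f_\epsilon * \mathrm{d}F^* = f_\epsilon * \mathrm{d}F_X$ almost everywhere. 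The identifiability assumption for the deconvolution problem then forces $F^* = F_X$.

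The main obstacle is showing tightness, i.e.\ that no mass escapes to $\pm\infty$ in the passage to the limit, so that $F^*$ is a proper distribution function. The argument I would use exploits the structure of the class $\mathcal{G}$: since $f_\epsilon \in \mathcal{G}$ is upper semicontinuous, of bounded variation on a compact interval, and monotone on the tails, it decays to zero at $\pm\infty$ without oscillation. Because $\int \hat{f}_{n_k} = 1 = \int f_Y$ and $\hat{f}_{n_k} \to f_Y$ in $L_1$, for every $\eta > 0$ there exists $M_\eta$ with $\int_{|y| > M_\eta} \hat{f}_{n_k}(y)\,\mathrm{d}y < \eta$ for all large $k$. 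If mass were escaping in $\hat{F}_{n_k}$, say $\hat{F}_{n_k}((-K, K)) \to 1 - \delta$ for some $\delta > 0$ and all $K$, then convolving with $f_\epsilon$ (whose mass cannot compensate the shift, since $f_\epsilon$ has finite total mass and tails that decay) would force a corresponding amount of mass of $\hat{f}_{n_k}$ to sit outside any fixed compact interval, contradicting the tail bound above. A clean way to formalize this is to choose $K$ so large that $\int_{|x| > K/2}\mathrm{d}\hat{F}_{n_k}(x) \geq \delta/2$ and $\int_{|z| \leq K/2} f_\epsilon(z)\,\mathrm{d}z \geq 1/2$, so that $\int_{|y|>K}\hat{f}_{n_k}(y)\,\mathrm{d}y \geq \delta/4$ for infinitely many $k$, contradicting the tightness of $(\hat{f}_{n_k})$ inherited from its $L_1$ convergence to $f_Y$.

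With $F^*$ shown to be a proper distribution function and identified with $F_X$, the subsequence principle yields $\hat{F}_n \Rightarrow F_X$ almost surely, which is equivalent to pointwise convergence at every continuity point of $F_X$. When $F_X$ is continuous everywhere on $\RR$, the classical theorem of P\'olya upgrades this to $\|\hat{F}_n - F_X\|_\infty \to 0$ almost surely, completing the proof. The last transfer back to $\hat{H}_n^b$ is immediate from the supremum identity already displayed just before the theorem statement.
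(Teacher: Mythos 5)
Your proposal follows essentially the same route as the paper's proof: fix $\omega$ in the probability-one set given by Lemma \ref{lemma_MLE_L1_consistency}, apply Helly's selection principle to an arbitrary subsequence, identify the weak limit with $F_X$ via Lemma \ref{lemma_convolution_continuity} and identifiability of the deconvolution problem, conclude weak convergence of the whole sequence by the subsequence principle, and upgrade to uniform convergence via P\'olya's theorem when $F_X$ is continuous. Your explicit tightness step fills in a detail the paper passes over silently (Helly alone only yields a sub-distribution function), and it is sound in outline; only the constants are off --- from $|x|>K/2$ and $|z|\le K/2$ you merely get $|x+z|>0$, so instead take escaping mass at least $\delta/2$ outside $[-(K+K_0),K+K_0]$ together with $\int_{|z|\le K_0}f_{\epsilon}(z)\,\mathrm{d}z\ge 1/2$ to force $\int_{|y|>K}\hat{f}_{n_k}(y)\,\mathrm{d}y\ge\delta/4$ for arbitrarily large $K$, contradicting the uniform tail bound inherited from the $L_1$ convergence to $f_Y$.
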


\begin{proof}
Let $(\Omega, \mathcal{A}, \PP)$ be a probability space supporting a sequence $Y_1,Y_2,\dots$ of iid random variables, distributed according to $f_Y$ as in (\ref{convolution_density_fy}). Set $\hat{f}_n := f_{\epsilon} * \mathrm{d}\hat{F}_n$. By Lemma \ref{lemma_MLE_L1_consistency} we know there exists a set $\Omega_0 \in \mathcal{A}$ with $\PP(\Omega_0) = 1$ such that for all $\omega \in \Omega_0$ we have $\Vert\hat{f}_n(\cdot,\omega) - f_Y(\cdot)\Vert_{L_1} \to 0$ as $n \to \infty$. Fix $\omega \in \Omega_0$ and choose an arbitrary subsequence $(n_l)_{l \geq 1} \subset (n)_{n \geq 1}$. By Helly's selection principle there exists a further subsequence $(n_k)_{k \geq 1} \subset (n_l)_{l \geq 1}$ such that $\hat{F}_{n_k}(\cdot, \omega)$ converges weakly to a distribution function $F$. By Lemma \ref{lemma_convolution_continuity} this implies that $\hat{f}_{n_k}$ converges to $f_{\epsilon} * \mathrm{d}F$ in $L_1$. Because the whole sequence $\hat{f}_n$ converges to $f_{\epsilon} * \mathrm{d}F_X$ in $L_1$ this implies $F=F_X$ by identifiability  of the deconvolution problem. Therefore, every subsequence of MLE's contains a further subsequence converging weakly to $F_X$. This implies weak convergence of the whole sequence to $F_X$. Finally, the uniform result follows from the monotonicity of all distribution functions in the sequence and $F_X$, and continuity of $F_X$.
\end{proof}

Turning to a consistency result for $\hat{H}_n^b$, we need to make sure that $g_K^S$ satisfies the conditions in Lemma \ref{gs_condition_lemma}. If $g_k^S$ satisfies these conditions, its boundedness implies the problem is identifiable by Corollary \ref{identifiability_corollary}. Note that identifiability in the original problem implies identifiability in the corresponding deconvolution problem. 

\begin{corollary}[Consistency of $\hat{H}_n^b$]
Assume $g_K^S$ is upper semicontinuous and of bounded variation on its support. Then, with probability one, $\lim_{n \to \infty} \hat{H}_n^b(\lambda) = H^b(\lambda)$ for each $\lambda$ where $H^b$ is continuous. If $H$ is continuous, then so is $H^b$, and with probability one:
\[\lim_{n\to \infty} \left\Vert\hat{H}_n^b - H^b\right\Vert_{\infty} = 0.\]
\end{corollary}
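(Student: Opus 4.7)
The plan is to deduce the claim directly from the already-established consistency theorem for $\hat{F}_n$, transporting it back through the logarithmic bijection that was used earlier to convert the original likelihood problem into a deconvolution problem. Recall that by construction $\hat{H}_n^b(\lambda)=\hat{F}_n(\log\lambda)$ and $H^b(\lambda)=F_X(\log\lambda)$, so once $\hat{F}_n\to F_X$ has been established, the corresponding statements for $\hat{H}_n^b$ are one change of variables away.

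To invoke the consistency theorem for $\hat{F}_n$, two hypotheses must be checked. First, by Lemma \ref{gs_condition_lemma}, upper semicontinuity and bounded variation of $g_K^S$ imply that $f_{\epsilon}(z)=g_K^S(e^z)e^z$ lies in the class $\mathcal{G}$. Second, identifiability of the deconvolution problem is required. Because $g_K^S$ has bounded variation on its compact support $(0,\sqrt{a_{\max}})$, it is in particular bounded, so part 2 of Corollary \ref{identifiability_corollary} gives identifiability of the original problem. Identifiability then carries over to the deconvolution problem, since the correspondence $F_X(x)=H^b(e^x)$ is a bijection between the two parameter spaces.

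For the pointwise assertion it suffices to note that $H^b$ is continuous at $\lambda>0$ if and only if $F_X$ is continuous at $\log\lambda$, so continuity points correspond under $\log$ and the pointwise-at-continuity-points conclusion for $\hat{F}_n$ transfers verbatim. For the uniform conclusion I must verify that continuity of $H$ implies continuity of $H^b$: by (\ref{H_Hb_relationship}), $H^b$ is the distribution function associated with the measure $x\,\mathrm{d}H(x)/\EE(\Lambda)$, which is atomless whenever $H$ is continuous, so $H^b$ has no jumps. Consequently $F_X$ is continuous on all of $\RR$, the consistency theorem delivers $\sup_{x\in\RR}|\hat{F}_n(x)-F_X(x)|\to 0$ almost surely, and because $\lambda\mapsto\log\lambda$ is a homeomorphism of $(0,\infty)$ onto $\RR$,
\[\sup_{\lambda>0}\bigl|\hat{H}_n^b(\lambda)-H^b(\lambda)\bigr|=\sup_{x\in\RR}\bigl|\hat{F}_n(x)-F_X(x)\bigr|.\]
The main obstacle, modest given the machinery already in place, is simply to observe that the bounded variation hypothesis on $g_K^S$ does double duty: it simultaneously places $f_\epsilon$ in $\mathcal{G}$ and, via boundedness, yields the identifiability required by the consistency theorem. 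No further assumption on $H$ beyond continuity is needed for the uniform result.
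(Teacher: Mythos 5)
Your proposal is correct and matches the paper's own reasoning: the paper likewise deduces the corollary from the consistency theorem for $\hat{F}_n$ via the identity $\sup_{\lambda>0}|\hat{H}_n^b(\lambda)-H^b(\lambda)|=\sup_{x\in\RR}|\hat{F}_n(x)-F_X(x)|$, using Lemma \ref{gs_condition_lemma} to place $f_\epsilon$ in $\mathcal{G}$ and the boundedness of $g_K^S$ (from bounded variation) together with Corollary \ref{identifiability_corollary} to get identifiability of the deconvolution problem. Your added remark that continuity of $H$ makes the measure $x\,\mathrm{d}H(x)/\EE(\Lambda)$ atomless, hence $H^b$ continuous, is a correct filling-in of a detail the paper leaves implicit.
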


\section{Algorithms}\label{section_algorithms}
In this section we describe some algorithms for computing the maximum likelihood estimator $\hat{H}_n^b$ (\ref{hb_mle}). Since a distribution function in $\mathcal{F}_n^{+}$ is discrete, it may be described by a probability vector. Let $\mathcal{P}_n$ be the class of probability vectors in $\RR^n$:
\begin{equation*}
    \mathcal{P}_n = \left\{(p_1,\dots,p_n) \in \RR^n: \sum_{i=1}^n p_i = 1 \text{ and }  p_i \geq 0 \text{ for all } i \in \{1,\dots,n\} \right\}.
\end{equation*}
A distribution function $H^b \in \mathcal{F}_n^{+}$ may be associated with the probability vector $p \in \mathcal{P}_n$ defined as $p_j = H^b(s_j) - H^b(s_{j-1})$ (recall: $H^b(s_0) = 0$). We can switch between probability vectors and distribution functions via:
\begin{equation}
    H^b(s_j) = \sum_{i=1}^j p_i \text{, \quad and \quad} p_j = H^b(s_j) - H^b(s_{j-1}).\label{switch_probability_vector}
\end{equation}

\subsection{Expectation Maximization (EM)}
The EM algorithm was first thoroughly studied in \cite{Dempster1977}. While it is typically used in parametric settings, it may also be used for non-parametric estimation. It is especially appealing due to its ease of implementation and its interpretation for incomplete data models. For the application of EM to our problem, we follow the description of EM in \cite{Wellner1997}. The authors describe the EM algorithm for problems similar to the one we are facing. The class of problems they consider is the following. 

Suppose we aim to estimate a distribution function $F$. We cannot directly observe a sample $X$ from $F$. Instead, we observe $Y = T(X,C)$, with $X \sim F$ and $C$ some random variable independent of $X$. Clearly, given Lemma \ref{independent_decomposition_lemma} the problem of estimating $H^b$ belongs to this class of problems with $T(x,c)=xc$ and $C \sim g_K^S$. Suppose we have an initial estimate $H_0^{b}\in \mathcal{F}_n^{+}$ of the CDF $H^b$. Let $p^{(0)}$ be the associated probability vector as in (\ref{switch_probability_vector}). Let $X_1,\dots,X_n \sim H^b$. In \cite{Wellner1997} it is shown that in their general context the EM algorithm yields the following update rule:
\begin{equation}
    p_j^{(k+1)} = \frac{1}{n}\sum_{i=1}^n \PP_{p^{(k)}}\left(X_i = s_j |s_1,\dots,s_n \right).\label{em_update_rule_1}
\end{equation}
We use the notation $\PP_{p}$ to indicate the probability measure associated with the probability vector $p$. For our 'random fraction' setting, we use Bayes' rule to obtain:
\begin{equation*}
    \PP_{p^{(k)}}\left(X_i = s_j |s_1,\dots,s_n \right) = \frac{g_K^S\left(\frac{s_i}{s_j}\right)\frac{1}{s_j}p_j^{(k)}}{\sum_{q=1}^n g_K^S\left(\frac{s_i}{s_q}\right)\frac{1}{s_q}p_q^{(k)}}.
\end{equation*}
Plugging this into (\ref{em_update_rule_1}) yields:
\begin{equation}
    p_j^{(k+1)} = \frac{1}{n}\sum_{i=1}^n \frac{\alpha_{i, j}}{\sum_{q=1}^n \alpha_{i, q} p_q^{(k)}}p_j^{(k)}, \text{ \quad with: } \alpha_{i,j} = g_K^S\left(\frac{s_i}{s_j}\right)\frac{1}{s_j}.\label{em_update_rule_final}
\end{equation}
When terminating the EM algorithm after an appropriate number of iterations we obtain $\hat{H}_n^b$ from $p^{(k)}$ via (\ref{switch_probability_vector}). The EM algorithm may for example be terminated when successive iterations do not meaningfully change the log-likelihood anymore. We do not provide a specific stopping criterion for EM, since we do not use it directly. We only use it in hybrid form with the Iterative Convex Minorant algorithm (ICM) which is described in the next section. For the ICM algorithm and the hybrid ICM-EM algorithm we do provide explicit termination conditions. 

\subsection{Iterative Convex Minorant (ICM)}
The ICM algorithm was first introduced in \cite{Groeneboom1992}. The version of ICM we discuss is described in \cite{Jongbloed1998} and is sometimes called the modified ICM algorithm. This modification of ICM ensures convergence under fairly general conditions. The algorithm is designed to minimize a convex function $\phi$ over the closed convex cone:
\begin{equation*}
    \mathcal{C}_{+} := \{\beta \in \RR^n: 0\leq \beta_1 \leq \beta_2 \leq \dots \leq \beta_n\}.
\end{equation*}
Recall equation (\ref{optimization_problem}), computing the estimator $\hat{H}_n^b$ is equivalent to solving the following optimization problem:
\begin{equation}
    \hat{\beta} \in \argmax_{\beta \in \mathcal{C}} l(\beta) = \argmax_{\beta \in \mathcal{C}} \frac{1}{n}\sum_{i=1}^n \log\left(\sum_{j=1}^n \alpha_{i,j}(\beta_j - \beta_{j-1}) \right). \label{optimization problem_icm}
\end{equation}
With $\beta_j = H^b(s_j)$ and $\hat{\beta} = (\hat{H}_n^b(s_1),\dots,\hat{H}_n^b(s_n))$.  From (\ref{optimization problem_icm}) it is clear that for any $\beta \in \mathcal{C}_{+}$ with $\beta_n < 1$, the likelihood can be increased by setting $\beta_n = 1$, since $\alpha_{i,j} \geq 0$. Hence, we may incorporate the constraint $\beta_n = 1$ instead of $\beta_n \leq 1$. We achieve this via a Lagrange multiplier. Define the convex function $\phi:\mathcal{C}_{+} \to \RR \cup \{\infty\}$ as:
\begin{equation*}
    \phi(\beta) = -l(\beta) + \beta_n = -\frac{1}{n}\sum_{i=1}^n \log\left(\sum_{j=1}^n \alpha_{i,j}(\beta_j - \beta_{j-1}) \right) + \beta_n.
\end{equation*}
Hereby we have incorporated the constraint, with a Lagrange multiplier equal to one. Also, the problem is now written as a convex minimization problem since $\hat{\beta} \in \argmin_{\beta \in \mathcal{C}_{+}}\phi(\beta)$. Therefore, the ICM algorithm may be used to compute the MLE. Suppose we have some initial estimate $\beta^{(0)}$. The idea of ICM is to locally approximate $\phi$ with the following quadratic form in iteration $k$:
\begin{equation}
     \phi_{(k)}(\beta) =\left(\beta - \beta^{(k)} + W\left(\beta^{(k)}\right)^{-1}\nabla \phi\left(\beta^{(k)}\right)\right)^\mathsf{T} W\left(\beta^{(k)}\right)\left(\beta - \beta^{(k)} + W\left(\beta^{(k)}\right)^{-1}\nabla \phi\left(\beta^{(k)}\right)\right).\label{quadratic_form_icm}
\end{equation}
The notation $\nabla \phi$ is used for the gradient of $\phi$, the vector of partial derivatives of $\phi$. The matrix $W$ is a diagonal matrix, its diagonal is often chosen equal to the diagonal of the Hessian matrix of $\phi$:
\begin{equation*}
    W\left(\beta^{(k)}\right) = \mathrm{diag}\left(\frac{\partial^2}{\partial \beta_j^2}\phi\left(\beta^{(k)}\right)\right)
\end{equation*}

In the ICM algorithm, $\phi_{(k)}$ is minimized over $\mathcal{C}_{+}$ instead of $\phi$ to obtain a candidate $\beta$ for $\beta^{(k+1)}$. If this candidate $\beta$ sufficiently decreases $\phi$ it is accepted, and we set $\beta^{(k+1)} = \beta$. Otherwise, a line-search is performed to obtain $\beta^{(k+1)}$, which is then given by a convex combination of $\beta$ and $\beta^{(k)}$. A precise description of the algorithm is given in Appendix \ref{appendix_algorithms}. We remark that minimizing (\ref{quadratic_form_icm}) is equivalent to computing the weighted least-squares estimator of a monotone regression function. This can be done efficiently, for more details see \cite{Jongbloed1998}. The partial derivatives of $\phi$ are related to those of $l$ as in (\ref{loglikelihood_derivatives1}) and (\ref{loglikelihood_derivatives2}), via:
\[\frac{\partial}{\partial \beta_j}\phi(\beta) = -\frac{\partial}{\partial \beta_j}l(\beta) + \mathds{1}{\{j=n\}} \quad \text{and} \quad \frac{\partial^2}{\partial \beta_j^2}\phi(\beta) = -\frac{\partial^2}{\partial \beta_j^2}l(\beta).\]
For ICM we use the following stopping criterion, stop whenever:
\begin{equation}
    \max_{j \in \{1,\dots,n\}}\left|\beta_j^{(k)} - \beta_j^{(k-1)}\right| < \varepsilon, \label{stopping_criterion}
\end{equation}
for 10 successive iterations. In simulations we set $\varepsilon = 10^{-4}$. The interpretation of this criterion is that we stop whenever the largest change in probability mass is below $\varepsilon$ for 10 successive iterations. We note that this criterion could be inappropriate if ICM approaches the optimum very slowly. In simulations (section \ref{section_simulations}) this was not an issue.

\subsection{Hybrid ICM-EM}
In \cite{Wellner1997} it was proposed to combine ICM and EM into a hybrid algorithm. The idea is that a single iteration of this hybrid algorithm consists of first performing one iteration of the ICM algorithm followed by one iteration of the EM algorithm. The ICM algorithm appears somewhat slow initially, if it is started far from the MLE, whereas close to the optimal value it converges quickly. On the contrary, the EM algorithm seems quicker at the start but has trouble converging when close to the optimum. Moreover, when performing an EM step after an ICM step, ICM ensures that many of the $p_j$'s are zero. From (\ref{em_update_rule_final}) we see that EM will never set such a $p_j$ to a positive value, hence EM only needs to operate in a lower dimensional space. In practice it seems that the hybrid ICM-EM algorithm inherits the strengths of both algorithms and is quicker than both ICM and EM. This was for example observed in simulations in \cite{Jongbloed2001} and \cite{Wellner1997}. As with ICM, the same termination condition (\ref{stopping_criterion}) is used.

\section{Regularization of the maximum likelihood estimator}\label{section_truncation}
In this section we describe how the MLE $\hat{H}_n^b$ may be used to estimate $H$, the distribution function of interest. At first glance it seems reasonable to plug in $\hat{H}_n^b$ for $H^b$ in equation (\ref{H_Hb_relationship}). Unfortunately, simulations indicate that this yields a poor estimate of $H$. In section \ref{section_simulations} we describe in detail how simulations are performed. For now, Figure \ref{naive_h_estimation_figure} shows the result of a single simulation run. This simulation corresponds to the case where each particle is a dodecahedron, $n=1000$, and $H$ corresponds to a standard exponential distribution. For this $H$, $H^b$ corresponds to a gamma distribution. Figure \ref{naive_h_estimation_figure} (a) shows that $\hat{H}_n^b$ closely resembles $H^b$. Meanwhile, in Figure \ref{naive_h_estimation_figure} (b) we observe that plugging in $\hat{H}_n^b$ for $H^b$ in equation (\ref{H_Hb_relationship}) yields a poor estimate of $H$. This is due to the influence of the behavior of $\hat{H}_n^b$ near zero.

\begin{figure}[b!]
    \centering
    \makebox[\linewidth]{\makebox[\linewidth]{
    \begin{subfigure}[t]{0.5\linewidth}
        \centering
        \includegraphics[width=\linewidth]{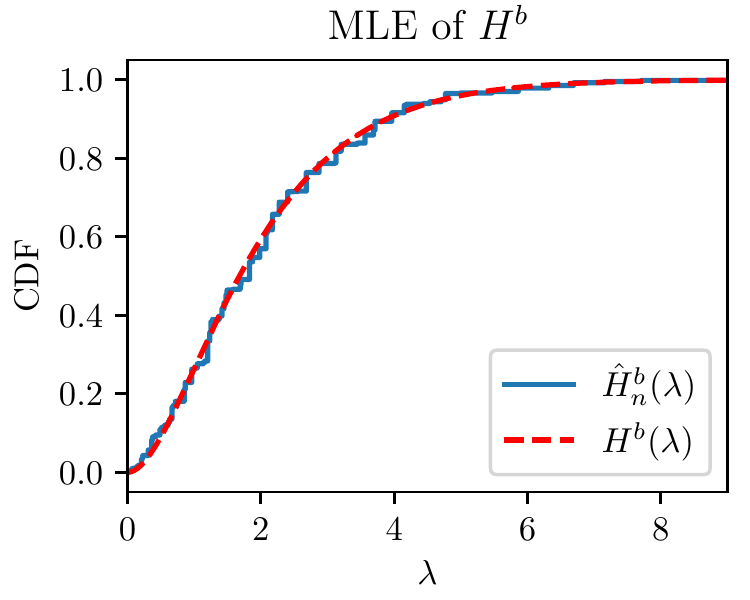}
    \end{subfigure}
    \begin{subfigure}[t]{0.5\linewidth}
        \centering
        \includegraphics[width=\linewidth]{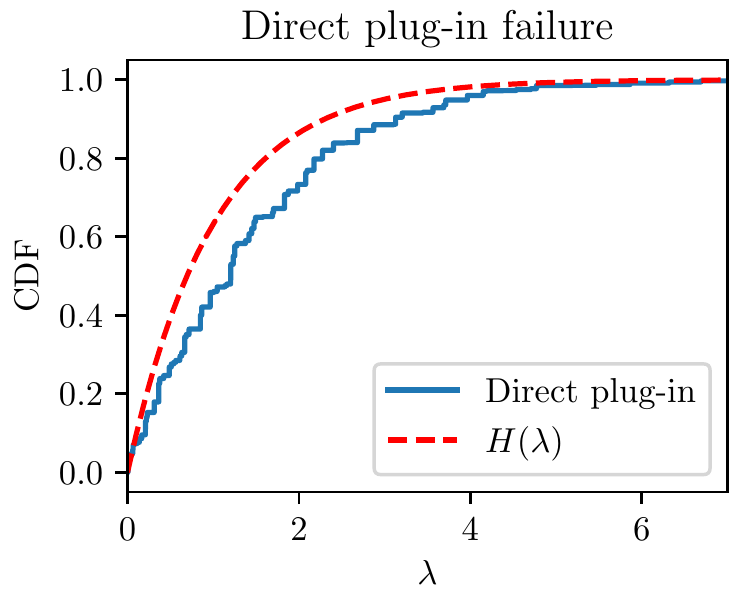}
    \end{subfigure}\hfill}}
    \caption{Left: MLE of $H^b$. Right: Direct plug-in estimate of $H$.}
    \label{naive_h_estimation_figure}
\end{figure}

We propose a regularization technique to resolve this issue. Let $t_n > 0$, truncating $\hat{H}_n^b$ at $t_n$ yields:
\begin{align*}
    \hat{H}_n^b(\lambda ; t_n) := \begin{cases} \frac{\hat{H}_n^b(\lambda) - \hat{H}_n^b(t_n)}{1 - \hat{H}_n^b(t_n)} & \text{ if } \lambda \geq t_n\\
0 & \text{ otherwise}\end{cases}.
\end{align*}
Plugging this truncated version of $\hat{H}_n^b$ into (\ref{H_Hb_relationship}) we obtain:
\begin{equation}
    \hat{H}_n(\lambda; t_n) := \frac{\int_0^\lambda \frac{1}{x}\mathrm{d}\hat{H}_n^b(x; t_n)}{\int_0^\infty \frac{1}{x}\mathrm{d}\hat{H}_n^b(x; t_n)} = \begin{cases}\frac{\int_{t_n}^\lambda \frac{1}{x}\mathrm{d}\hat{H}_n^b(x)}{\int_{t_n}^\infty \frac{1}{x}\mathrm{d}\hat{H}_n^b(x)} & \text{ if } \lambda \geq t_n \\
    0 & \text{ if } 0 \leq \lambda < t_n \end{cases}. \label{truncation_estimator}
\end{equation}
Therefore, we introduce a new parameter $t_n$, which we refer to as the truncation parameter. In the following lemma we show that for an appropriate choice of the truncation parameter $t_n$, a sequence of approximating CDFs converging to $H^b$ may be de-biased to obtain a close approximation of $H$.

\begin{lemma}\label{consistent_truncation_theorem}
Let $H$ be a continuous CDF on $(0,\infty)$, with finite first moment and length-biased version $H^b$. Let $(t_n)_{n \geq 1}$, $t_n > 0$ be a sequence such that $\lim_{n\to \infty}t_n = 0$. Let $(H_n^b)_{n \geq 1}$ be a sequence of CDFs. Assume $H_n^b$ converges uniformly to $H^b$ with rate at least $t_n$, that is: $||H_n^b - H^b||_{\infty} = o(t_n)$. Define:
\begin{equation*}
    H_n(\lambda) = \begin{cases}\frac{\int_{t_n}^\lambda \frac{1}{x}\mathrm{d}H_n^b(x)}{\int_{t_n}^\infty \frac{1}{x}\mathrm{d}H_n^b(x)} & \text{ if } \lambda \geq t_n \\
    0 & \text{ if } 0 \leq \lambda < t_n \end{cases},
\end{equation*}
then: $\lim_{n \to \infty} ||H_n - H||_{\infty} = 0$.
\end{lemma}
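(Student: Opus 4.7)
The plan is to split the supremum $\|H_n - H\|_\infty$ into two regions. On $\{\lambda < t_n\}$, we have $H_n(\lambda) = 0$ by construction, so $|H_n(\lambda) - H(\lambda)| = H(\lambda) \leq H(t_n)$, which tends to $0$ by continuity of $H$ at $0$. All the content therefore lies in the region $\lambda \geq t_n$. There I write $H_n(\lambda) = N_n(\lambda)/D_n$ with $N_n(\lambda) = \int_{t_n}^\lambda x^{-1}\, \mathrm{d}H_n^b(x)$ and $D_n = \lim_{\lambda \to \infty} N_n(\lambda)$, and likewise $H(\lambda) = N(\lambda)/D$, where $D = 1/\EE(\Lambda) > 0$. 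A standard quotient estimate using $N(\lambda) \leq D$ gives
\[
|H_n(\lambda) - H(\lambda)| \leq \frac{|N_n(\lambda) - N(\lambda)|}{D_n} + \frac{|D - D_n|}{D_n},
\]
so the task reduces to showing $D_n \to D$ and $\sup_{\lambda \geq t_n} |N_n(\lambda) - N(\lambda)| \to 0$.

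For the numerator, the key decomposition is $N(\lambda) - N_n(\lambda) = \int_0^{t_n} x^{-1}\, \mathrm{d}H^b(x) + \int_{t_n}^\lambda x^{-1}\, \mathrm{d}(H^b - H_n^b)(x)$. The first piece equals $H(t_n)/\EE(\Lambda)$ by the density relation $\mathrm{d}H^b(x) = x\, h(x)/\EE(\Lambda)\, \mathrm{d}x$, and hence tends to $0$. For the second piece, I apply integration by parts (valid since $1/x$ is continuous and of bounded variation on any compact subinterval of $(0,\infty)$) to rewrite it as
\[
\frac{(H^b - H_n^b)(\lambda)}{\lambda} - \frac{(H^b - H_n^b)(t_n)}{t_n} + \int_{t_n}^\lambda \frac{(H^b - H_n^b)(x)}{x^2}\, \mathrm{d}x.
\]
All three terms are bounded in absolute value by $\|H^b - H_n^b\|_\infty / t_n = o(1)$ uniformly in $\lambda \geq t_n$, where for the last term I use $\int_{t_n}^\infty x^{-2}\, \mathrm{d}x = 1/t_n$. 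The identical argument, letting $\lambda \to \infty$ so the first boundary term vanishes since $H^b, H_n^b \to 1$, yields $|D - D_n| \to 0$. Because $D_n \to D > 0$, plugging back into the quotient estimate finishes the $\lambda \geq t_n$ region, and combining with the easy region completes the proof.

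The main obstacle, and the reason the lemma is stated with a specific rate condition rather than a mere consistency hypothesis, is the singularity of the kernel $1/x$ at the origin, which forces the truncation at $t_n$ in the definition of $H_n$ in the first place. After integration by parts, this singularity reappears as an unavoidable factor $1/t_n$ multiplying $\|H^b - H_n^b\|_\infty$. The hypothesis $\|H_n^b - H^b\|_\infty = o(t_n)$ is exactly calibrated so that this product vanishes, and no slower joint rate of $(t_n, \|H_n^b - H^b\|_\infty)$ would suffice within this proof scheme.
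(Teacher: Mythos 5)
Your proposal is correct and follows essentially the same route as the paper's proof: handle $\lambda < t_n$ via continuity of $H$, extract the $H(t_n)$ contribution from the mass of $H^b$ below $t_n$, and control $\int_{t_n}^\lambda \frac{1}{x}\mathrm{d}(H^b-H_n^b)$ by integration by parts at the cost of a factor $1/t_n$, which the hypothesis $\Vert H_n^b - H^b\Vert_\infty = o(t_n)$ absorbs; your separate numerator/denominator quotient estimate is just a cosmetic reorganization of the paper's single algebraic rearrangement of $H(\lambda)-H_n(\lambda)$. The only nitpick is your appeal to a density $h$ for the identity $\int_0^{t_n}\frac{1}{x}\mathrm{d}H^b = H(t_n)/\EE(\Lambda)$, which is unnecessary since it follows directly from the Stieltjes relation $\mathrm{d}H^b(x) = x\,\mathrm{d}H(x)/\EE(\Lambda)$ assumed in the lemma.
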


The proof is given in Appendix \ref{appendix_proofs}. Lemma \ref{consistent_truncation_theorem} shows that truncation is a viable approach for consistent estimation of $H$. Note that in Lemma \ref{consistent_truncation_theorem}, we may take $t_n = \sqrt{\Vert H_n^b - H^b \Vert_{\infty}}$. In practice the result cannot directly be applied to $\hat{H}_n^b$ since the quantity $\Vert \hat{H}_n^b - H^b \Vert_{\infty}$ is unknown. We propose a rule of thumb for $t_n$. Let $s \in \RR$ and define:
\begin{align}
    \hat{F}_n^S(s;t) &:= \int_0^\infty G_K^S\left(\frac{s}{\lambda}\right)\mathrm{d}\hat{H}_n^b(\lambda ; t) \label{induced_distribution_FS} \\
    \bar{F}_n^S(s) &:= \frac{1}{n}\sum_{i=1}^n \mathds{1}{\{s_i \leq s\}}.\nonumber
\end{align}
Note that $\hat{F}_n^S(\cdot;t)$ is the distribution function of observed square root section areas induced by the biased size distribution $\hat{H}_n^b(\cdot ; t)$. That is, if $\hat{H}_n^b(\cdot ; t)$ is the true biased size distribution, then $\hat{F}_n^S(\cdot;t)$ is the corresponding distribution of observed square root section areas. We propose the following choice for $t_n$:
\begin{equation}
\hat{t}_n := \argmin_{t \in \{s_1,\dots,s_n\}} \int_0^\infty |\hat{F}_n^S(s;t) - \bar{F}_n^S(s)|\mathrm{d}s. \label{truncation_parameter_estimator}
\end{equation}
Hence, $\hat{t}_n$ minimizes the $L^1$-distance between the CDF of the observed square root section areas, induced by the estimated (biased) size distribution, and the empirical CDF of observed square root section areas. We minimize over $\{s_1,\dots,s_n\}$ for computational convenience.

\section{Simulations}\label{section_simulations}
In the previous sections we have introduced the MLE $\hat{H}_n^b$, and shown that under reasonable assumptions it is a consistent estimator of $H^b$. Also, a regularization technique was introduced to consistently estimate the size distribution function $H$ using the MLE. In this section some simulations results are presented to assess the performance of these estimators for $H^b$ and $H$. The code used for the simulations may be found at \url{https://github.com/thomasvdj/pysizeunfolder}. Using this code the simulation and estimation procedure can be carried out in principle for any choice of convex polyhedron for the reference particle $K$.

Let us start by describing how to generate an iid sample of observed section areas, for a given $H$ and a chosen reference particle $K$. Lemma \ref{independent_decomposition_lemma} shows that it is sufficient to draw $Z \sim G_K$ and independently draw $\Lambda_b \sim H^b$, followed by setting $A:=Z\Lambda_b^2$. $A$ may be considered a random section area, and repeating these steps $n$ times yields an iid sample $A_1,\dots,A_n$ distributed according to $f_A$. Taking the square root yields a sample of observed square root section areas. A sampling scheme for generating IUR planes through $K$ is described in \cite{Davy1977}, see \cite{vdjagt2022} for drawing from $G_K$. Finally, we consider some well-known parametric distributions for $H$, for these choices $H^b$ corresponds to some other well-known parametric distribution. Hence, drawing from $H^b$ is straightforward. The following choices for $H$ are considered, with the corresponding $H^b$:
\begin{enumerate}
\item \emph{Exponential distribution}: For $H$ we consider a standard exponential distribution, such that $H^b$ corresponds to a gamma distribution.
\[H(\lambda) = 1 - e^{-\lambda}, \text{ \ and \ } H^b(\lambda) = 1 - (\lambda + 1)e^{-\lambda}, \ \lambda \geq 0.\]
\item \emph{Lognormal distribution}: For $H$ we consider a lognormal distribution with parameters $\mu$ and $\sigma$. For this $H$, $H^b$ corresponds to a lognormal distribution with parameters $\mu + \sigma^2$ and $\sigma$. We set $\mu=2$, $\sigma = 1/2$.
\[H(\lambda) = \Phi\left(\frac{\log(\lambda) - \mu}{\sigma} \right) , \text{ \ and \ } H^b(\lambda) = \Phi\left(\frac{\log(\lambda) - \mu - \sigma^2}{\sigma} \right), \ \lambda > 0.\]
Here, $\Phi$ denotes the CDF of a standard normal distribution.
\end{enumerate}

\begin{figure}[b!]
\centering
    \makebox[\linewidth]{\makebox[\linewidth]{
    \begin{subfigure}[t]{0.5\linewidth}
        \centering
        \includegraphics[width=\linewidth]{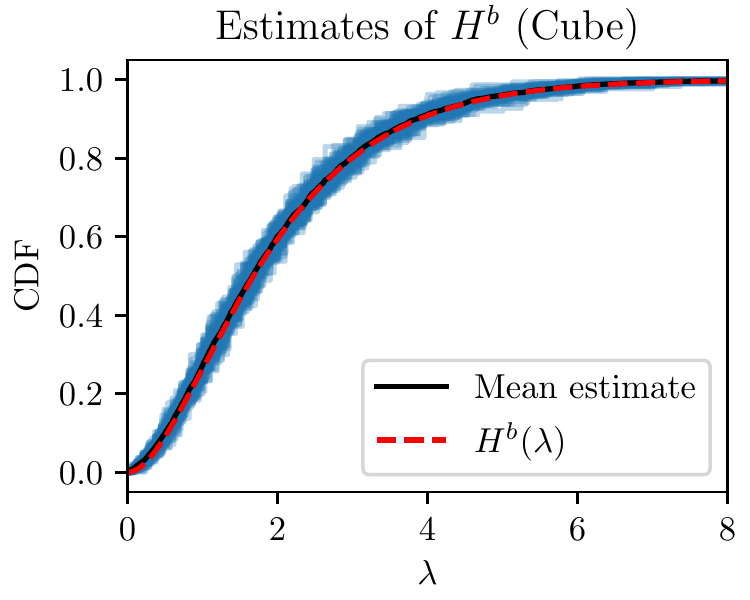}
    \end{subfigure}
    \begin{subfigure}[t]{0.5\linewidth}
        \centering
        \includegraphics[width=\linewidth]{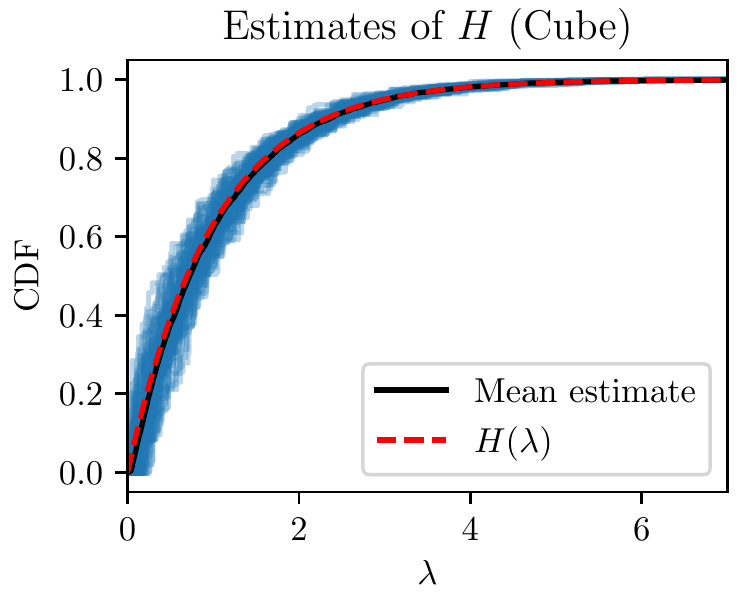}
    \end{subfigure}\hfill}}
    \makebox[\linewidth]{\makebox[\linewidth]{
    \begin{subfigure}[t]{0.5\linewidth}
        \centering
        \includegraphics[width=\linewidth]{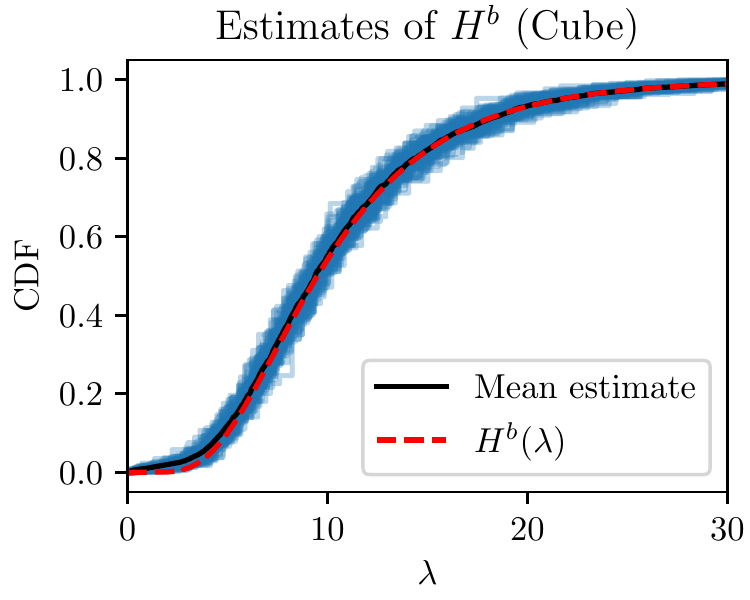}
    \end{subfigure}
    \begin{subfigure}[t]{0.5\linewidth}
        \centering
       \includegraphics[width=\linewidth]{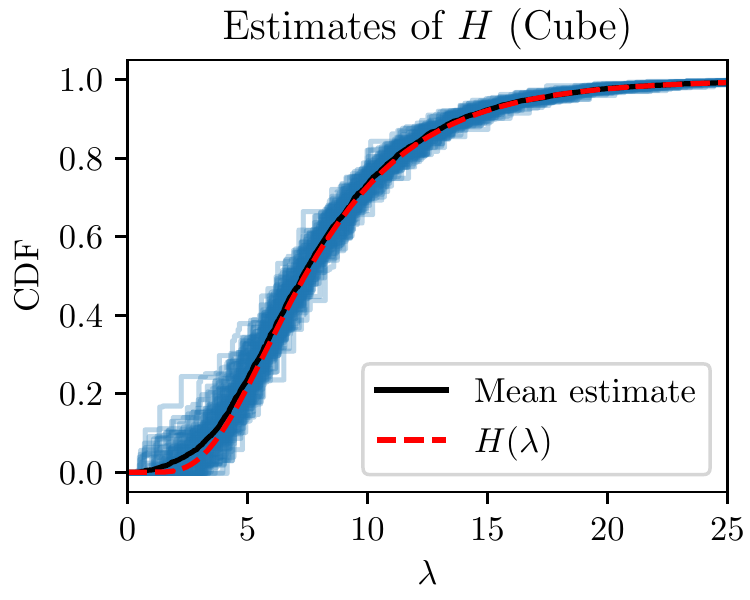}
    \end{subfigure}\hfill}}
    \caption{Simulation results for the cube, $n=1000$. Top left: $H^b$ is a gamma distribution. Top right: $H$ is an exponential distribution. Bottom left: $H$ is a lognormal distribution. Bottom right: $H$ is a lognormal distribution.}
    \label{cube_simulations_figure}
\end{figure}

\begin{table}[b!]
\centering
\caption{Simulation results for the dodecahedron.}\label{dodecahedron_table}
\begin{tabular}
    {
  r
  l
  S[table-format=1.4, table-column-width=2cm]
>{{{\lp}}} 
S[round-precision=2, table-format = 1.3,table-space-text-pre=\lp]
@{,\,} 
S[round-precision=2, table-format = 1.3,table-space-text-post=\rp]
<{{{\rp}}} 
    S[table-format=1.4]
>{{{\lp}}} 
S[round-precision=2,table-format = 1.3,table-space-text-pre=\lp]
@{,\,} 
S[round-precision=2,table-format = 1.3,table-space-text-post=\rp]
<{{{\rp}}} 
@{}l@{}
}
\toprule
\multicolumn{2}{c}{} & \multicolumn{3}{c}{$\Vert \hat{H}_n^b - H^b \Vert_{\infty}$} & \multicolumn{3}{c}{$\Vert \hat{H}_n - H \Vert_{\infty}$} &\\
\cmidrule(r){1-1}\cmidrule(lr){2-2}\cmidrule(lr){3-5}\cmidrule(l){6-9} 
  \multicolumn{1}{c}{$n$} &        \multicolumn{1}{c}{$H$} & {mean error} &    \multicolumn{2}{c}{(2.5\%, 97.5\%)} & {mean error} &    \multicolumn{2}{c}{(2.5\%, 97.5\%)} & \\
\cmidrule(r){1-1}\cmidrule(lr){2-2}\cmidrule(lr){3-5}\cmidrule(l){6-9} 
 1000 & Exponential &   0.057671 & 0.038322 & 0.076175 &   0.118394 & 0.064229 & 0.202112 &   \\
 1000 &   Lognormal &   0.065665 & 0.044518 & 0.104524 &   0.092446 & 0.057303 & 0.180478 &   \\
 2000 & Exponential &   0.045230 & 0.031512 & 0.062793 &   0.097153 & 0.054488 & 0.166696 &   \\
 2000 &   Lognormal &   0.052757 & 0.033663 & 0.081283 &   0.078280 & 0.044436 & 0.125059 &   \\
 5000 & Exponential &   0.031843 & 0.023648 & 0.044521 &   0.068698 & 0.035735 & 0.136639 &   \\
 5000 &   Lognormal &   0.037992 & 0.027410 & 0.054203 &   0.058649 & 0.037004 & 0.097180 &   \\
10000 & Exponential &   0.026044 & 0.019095 & 0.035397 &   0.057817 & 0.028961 & 0.117645 &   \\
10000 &   Lognormal &   0.029804 & 0.022833 & 0.039797 &   0.047790 & 0.027651 & 0.085952 &   \\
\bottomrule
\end{tabular}

\vspace{5pt}
\caption{Simulation results for the cube.}\label{cube_table}
\begin{tabular}
    {
  r
  l
  S[table-format=1.4, table-column-width=2cm]
>{{{\lp}}} 
S[round-precision=2, table-format = 1.3,table-space-text-pre=\lp]
@{,\,} 
S[round-precision=2, table-format = 1.3,table-space-text-post=\rp]
<{{{\rp}}} 
    S[table-format=1.4]
>{{{\lp}}} 
S[round-precision=2,table-format = 1.3,table-space-text-pre=\lp]
@{,\,} 
S[round-precision=2,table-format = 1.3,table-space-text-post=\rp]
<{{{\rp}}} 
@{}l@{}
}
\toprule
\multicolumn{2}{c}{} & \multicolumn{3}{c}{$\Vert \hat{H}_n^b - H^b \Vert_{\infty}$} & \multicolumn{3}{c}{$\Vert \hat{H}_n - H \Vert_{\infty}$} &\\
\cmidrule(r){1-1}\cmidrule(lr){2-2}\cmidrule(lr){3-5}\cmidrule(l){6-9} 
  \multicolumn{1}{c}{$n$} &        \multicolumn{1}{c}{$H$} & {mean error} &    \multicolumn{2}{c}{(2.5\%, 97.5\%)} & {mean error} &    \multicolumn{2}{c}{(2.5\%, 97.5\%)} & \\
\cmidrule(r){1-1}\cmidrule(lr){2-2}\cmidrule(lr){3-5}\cmidrule(l){6-9} 
 1000 & Exponential &   0.064703 & 0.044717 & 0.093717 &   0.133794 & 0.072821 & 0.229255 &   \\
 1000 &   Lognormal &   0.079351 & 0.055051 & 0.119813 &   0.107424 & 0.061791 & 0.166210 &   \\
 2000 & Exponential &   0.050867 & 0.035528 & 0.069692 &   0.106668 & 0.058667 & 0.193202 &   \\
 2000 &   Lognormal &   0.062959 & 0.043395 & 0.087236 &   0.091057 & 0.050023 & 0.172565 &   \\
 5000 & Exponential &   0.039365 & 0.028866 & 0.052752 &   0.078655 & 0.042695 & 0.131728 &   \\
 5000 &   Lognormal &   0.045957 & 0.032615 & 0.059284 &   0.067227 & 0.039993 & 0.108839 &   \\
10000 & Exponential &   0.030813 & 0.022231 & 0.041903 &   0.061987 & 0.036163 & 0.095589 &   \\
10000 &   Lognormal &   0.036842 & 0.027549 & 0.046776 &   0.054416 & 0.031381 & 0.091156 &   \\
\bottomrule
\end{tabular}

\vspace{5pt}
\caption{Simulation results for the tetrahedron.}\label{tetrahedron_table}

\begin{tabular}
    {
  r
  l
  S[table-format=1.4, table-column-width=2cm]
>{{{\lp}}} 
S[round-precision=2, table-format = 1.3,table-space-text-pre=\lp]
@{,\,} 
S[round-precision=2, table-format = 1.3,table-space-text-post=\rp]
<{{{\rp}}} 
    S[table-format=1.4]
>{{{\lp}}} 
S[round-precision=2,table-format = 1.3,table-space-text-pre=\lp]
@{,\,} 
S[round-precision=2,table-format = 1.3,table-space-text-post=\rp]
<{{{\rp}}} 
@{}l@{}
}
\toprule
\multicolumn{2}{c}{} & \multicolumn{3}{c}{$\Vert \hat{H}_n^b - H^b \Vert_{\infty}$} & \multicolumn{3}{c}{$\Vert \hat{H}_n - H \Vert_{\infty}$} &\\
\cmidrule(r){1-1}\cmidrule(lr){2-2}\cmidrule(lr){3-5}\cmidrule(l){6-9} 
  \multicolumn{1}{c}{$n$} &        \multicolumn{1}{c}{$H$} & {mean error} &    \multicolumn{2}{c}{(2.5\%, 97.5\%)} & {mean error} &    \multicolumn{2}{c}{(2.5\%, 97.5\%)} & \\
\cmidrule(r){1-1}\cmidrule(lr){2-2}\cmidrule(lr){3-5}\cmidrule(l){6-9} 
 1000 & Exponential &   0.094815 & 0.061885 & 0.146928 &   0.197051 & 0.089983 & 0.387729 &   \\
 1000 &   Lognormal &   0.109597 & 0.078199 & 0.152344 &   0.163102 & 0.091326 & 0.297084 &   \\
 2000 & Exponential &   0.079238 & 0.057953 & 0.103738 &   0.152570 & 0.084589 & 0.282576 &   \\
 2000 &   Lognormal &   0.093009 & 0.069480 & 0.126959 &   0.134373 & 0.082102 & 0.240345 &   \\
 5000 & Exponential &   0.060153 & 0.046412 & 0.080346 &   0.119844 & 0.061118 & 0.245369 &   \\
 5000 &   Lognormal &   0.076100 & 0.058399 & 0.093358 &   0.099715 & 0.067513 & 0.145027 &   \\
10000 & Exponential &   0.051411 & 0.038026 & 0.063890 &   0.101110 & 0.053744 & 0.204433 &   \\
10000 &   Lognormal &   0.064268 & 0.047599 & 0.082674 &   0.080482 & 0.059274 & 0.111282 &   \\
\bottomrule
\end{tabular}
\end{table}

For the simulations we consider the following shapes for the particles: the dodecahedron, cube and tetrahedron. As for the specific choice of the reference particle $K$, each of the shapes are scaled such that they have volume 1. Note that for the dodecahedron each edge is parallel to exactly one other edge and the tetrahedron does not have any parallel edges. Therefore these shapes are such that $G_K$ has a Lebesgue density by Theorem \ref{G_properties_theorem}. This is not the case for the cube. For the cube we could consider a perturbed cube by slightly tilting each of its faces, the resulting shape does not have any parallel edges. Note that we rely on Monte-Carlo approximations of $g_K^S$, we refer to \cite{vdjagt2022} for further discussion on why it is reasonable to apply this density approximation procedure to the cube, even though it is not covered by Theorem \ref{G_properties_theorem}.

Now that we covered the simulation of iid samples we discuss the computation of estimators. For a given choice of $n$, $H$ and shape for the particles we generate a sample of $n$ observed (square root) section areas. The MLE $\hat{H}_n^b$ is computed using the hybrid ICM-EM algorithm. The computation of the MLE requires that we can evaluate $g_K^S$ in given points. As mentioned before, there is typically no explicit expression for $g_K^S$ and we use the Monte Carlo simulation scheme described in \cite{vdjagt2022} for approximating $g_K^S$ (recall Figure \ref{g_s_estimates}). For estimating $H$ we compute $\hat{H}_n(\cdot, \hat{t}_n)$ as in (\ref{truncation_estimator}), with $\hat{t}_n$ as in (\ref{truncation_parameter_estimator}). Throughout this section we refer to this estimator simply as $\hat{H}_n$. Note that for the computation of $\hat{t}_n$ we require $G_K^S$, which is also not explicitly known. Hence, similarly to $g_K^S$ we use a Monte-Carlo approximation of $G_K^S$. In this case we use an empirical distribution function based on the same sample used for approximating $g_K^S$.

We perform repeated simulations as follows. For various choices of $n$ we generate a sample of $n$ observed section areas. This is repeated 100 times for each choice of $n$, $H$ and shape for the particles. Simulation results for the cube are shown in Figure \ref{cube_simulations_figure}. These results correspond to $n=1000$. Each of the blue lines corresponds to one of the 100 estimates, each estimate based on a different sample of size $n=1000$. The black line is the point-wise average of all estimates. Further simulation results for the other shapes are summarized in Tables \ref{dodecahedron_table}, \ref{cube_table} and \ref{tetrahedron_table}. We quantify the error of the estimate as the supremum distance between the true $H^b$ and $\hat{H}_n^b$, and similarly for the error of the estimates of $H$. The mean error is then the mean taken over the 100 resulting errors of the estimates. For these 100 resulting errors the $2.5\%$ and $97.5\%$ quantiles are also shown.

Let us discuss the content of Tables \ref{dodecahedron_table}, \ref{cube_table} and \ref{tetrahedron_table}. As expected, as $n$ increases the average error decreases, for all chosen shapes and size distributions, both for the estimates of $H$ and $H^b$. Comparing the average supremum error for a fixed $n$, and a fixed size distribution, it is clear that the errors are smallest for the dodecahedron, followed by the cube and finally the average error is largest for the tetrahedron. This is the case for both the average errors for estimating $H$ as well as $H^b$. Note that estimating $H$ instead of $H^b$ increases the supremum error, and the corresponding mean supremum errors are also larger. These larger errors are also evident in Figure \ref{cube_simulations_figure}. We note that for some practical applications an estimate of $H^b$ may be sufficient.

\begin{table}[t!]
\centering
\caption{Algorithms mean run-times and mean number of iterations.}\label{algorithm_times_table}
\begin{tabular}
    {
  r
  S[table-format=3.2, table-column-width=1.2cm]
  S[table-format=4.0, table-column-width=1.6cm]
  S[table-format=2.3, table-column-width=1.2cm]
  S[table-format=2.1, table-column-width=1.6cm]
}
\toprule
\multicolumn{1}{c}{} & \multicolumn{2}{c}{ICM} & \multicolumn{2}{c}{ICM-EM} \\
\cmidrule(r){1-1}\cmidrule(lr){2-3}\cmidrule(l){4-5} 
  \multicolumn{1}{c}{$n$} & {time $(s)$} & {\# iterations} & {time $(s)$} & {\# iterations} \\
\cmidrule(r){1-1}\cmidrule(lr){2-3}\cmidrule(l){4-5}  
1000 &      3.988466 &              313.7 &      0.510773 &               26.4    \\
2000 &     27.589594 &              502.0 &      2.423374 &               30.7    \\
5000 &    415.241904 &              784.3 &     25.808362 &               62.5    \\
\bottomrule
\end{tabular}
\end{table}

Finally, we briefly touch upon computational efficiency of the algorithms for computing $\hat{H}_n^b$. We take for the shape of the particles the dodecahedron and for $H$ the previously introduced lognormal distribution. In Table \ref{algorithm_times_table} the average run-times and iteration counts of the ICM and ICM-EM algorithms are shown, averaged over 10 simulation runs. The EM algorithm is not included in the table, in simulations it was several orders of magnitude slower than the other algorithms. Clearly, ICM-EM is considerably faster than ICM. 

\section{Concluding remarks}\label{section_conclusion}
In this paper we have studied a generalization of the classical Wicksell corpuscle problem, considering an arbitrary convex shape for the particles instead of spheres. In particular, for the problem of estimating the CDF $H$ of the particle size distribution an identifiability result is derived. We also obtain an inversion formula via the Mellin transform. A nonparametric maximum likelihood estimator is proposed for the biased size distribution $H^b$ and it is proven to be uniformly strongly consistent. Moreover, this estimator can be computed efficiently in practice. In a simulation study the proposed estimators for $H^b$ and $H$ perform well for various choices of particle shapes and particle size distributions. 

\appendix
\section{Appendix: proofs}\label{appendix_proofs}

\begin{proof}[Proof of Lemma \ref{Mellin_uniqueness}]
The result follows almost immediately from theorem 7.8.2. in \cite{Kawata1972}, which is a Mellin inversion theorem. Suppose $X \sim F_1$ and $Y \sim F_2$. By assumption $\mathcal{M}_X$ and $\mathcal{M}_Y$ are analytic on $\st(\alpha,\beta)$, $0 \leq \alpha < \beta$. Let $c \in (\alpha,\beta)$, and assume $\mathcal{M}_X(c+it) = \mathcal{M}_Y(c+it)$ for all $t\in\RR$. Let $x>0$, by theorem 7.8.2. from \cite{Kawata1972} we obtain:
\begin{align*}
\bar{F}_1(x) &:= \frac{1}{2}(F_1(x+) + F_1(x-)) = \lim_{T \to \infty} \frac{1}{2\pi i}\int_{c-iT}^{c+iT} -\mathcal{M}_X(s)\frac{x^{-s+1}}{s}\mathrm{d}s\\
\bar{F}_2(x) &:= \frac{1}{2}(F_2(x+) + F_2(x-)) = \lim_{T \to \infty} \frac{1}{2\pi i}\int_{c-iT}^{c+iT} -\mathcal{M}_Y(s)\frac{x^{-s+1}}{s}\mathrm{d}s.
\end{align*}
Here: $F(x+) := \lim_{h \downarrow 0} F(x+h)$ and $F(x-) := \lim_{h \uparrow 0} F(x+h)$. Note that for a continuity point $x$ of $F_1$, $\bar{F}_1(x) = F_1(x)$. Because CDFs are right continuous we obtain: $F_1(x) = \bar{F}_1(x+)$ and $F_2(x) = \bar{F}_2(x+)$. Hence, it is sufficient to show $\bar{F}_1 =\bar{F}_2$. Because $\mathcal{M}_X(c+it) = \mathcal{M}_Y(c+it)$ for all $t\in\RR$:
\[\bar{F}_1(x) - \bar{F}_2(x) = \lim_{T \to \infty} \frac{1}{2\pi i}\int_{c-iT}^{c+iT} -(\mathcal{M}_X(s)-\mathcal{M}_Y(s))\frac{x^{-s+1}}{s}\mathrm{d}s = 0,\]
which finishes the proof.
\end{proof}

\begin{proof}[Proof of Lemma \ref{gs_condition_lemma}]
$f_{\epsilon}$ is upper semicontinuous, as it is given by a product, and a composition of an upper semicontinuous function and a continuous function. By Theorem \ref{G_properties_theorem}, $g_K^S$ is non-decreasing on $(0, \tau_K)$ for some $0 < \tau_K \leq \sqrt{a_{\text{max}}}$. Choose $M > \sqrt{a_{\text{max}}}$ large enough such that $e^{-M} < \tau_K$. It now immediately follows that $f_{\epsilon}(z) = 0$ for $z \in [M,\infty)$ and $f_{\epsilon}$ is monotonically increasing on $(-\infty,-M]$. It remains to show that $f_{\epsilon}$ is of bounded variation on $[-M, M]$.
Let $-M < z_0 < z_1 < \dots < z_m < M$ be an arbitrary partition of $[-M,M]$. Then it follows:
\begin{align}
    \sum_{i=1}^m |f_{\epsilon}(z_i) - f_\epsilon(z_{i-1})| &= \sum_{i=1}^m |g_K^S(e^{z_i})e^{z_i} - g_K^S(e^{z_{i}})e^{z_{i-1}} + g_K^S(e^{z_{i}})e^{z_{i-1}} - g_K^S(e^{z_{i-1}})e^{z_{i-1}}| \nonumber \\
    &\leq ||g_K^S||_{\infty} \sum_{i=1}^m |e^{z_i} - e^{z_{i-1}}| + e^M \sum_{i=1}^m|g_S(e^{z_{i}}) - g_S(e^{z_{i-1}})| \label{telescoping_sum_gs_condition}\\
    &\leq ||g_K^S||_{\infty} e^M +  e^M V_0^{\sqrt{a_{\text{max}}}}\left(g_K^S\right) < \infty \nonumber.
\end{align}
Note that the first sum in (\ref{telescoping_sum_gs_condition}) telescopes. In the final step we use the fact that $g_K^S$ is bounded and is of bounded variation on its support. Because the above computation holds for arbitrary partitions of $[-M, M]$ we find: $V_{-M}^M(f_{\epsilon}) < \infty$, which finishes the proof.
\end{proof}

\begin{proof}[Proof of Lemma \ref{lemma_convolution_continuity}]
    Because $f_{\epsilon} \geq 0$ is a Lebesgue density, for every $m \in \NN$ there exists a bounded continuous probability density function $f_{\epsilon}^m$ such that $||f_{\epsilon}^m - f_{\epsilon}||_{L^1} \leq 1/m$ (see Lemma \ref{continuous_density_approximation} in Appendix \ref{appendix_proofs}). Let $m \in \NN$, then:
    \begin{align}
        ||f_n -f_Y||_{L_1} &= \int \left|\int f_{\epsilon}(z-x) - f_{\epsilon}^m(z-x) + f_{\epsilon}^m(z-x)\mathrm{d}(F_n -F_X)(x) \right|\mathrm{d}z \nonumber \\
        \begin{split}
            &\leq \int \left|\int f_{\epsilon}(z-x) - f_{\epsilon}^m(z-x)\mathrm{d}(F_n -F_X)(x) \right|\mathrm{d}z \\ 
            &\qquad + \int \left|\int f_{\epsilon}^m(z-x)\mathrm{d}(F_n -F_X)(x) \right|\mathrm{d}z .\label{convolution_continuity_proof}
        \end{split}
    \end{align}
Via the triangle inequality and Fubini, the first term in (\ref{convolution_continuity_proof}) is bounded by:
\begin{align}
        &\phantom{\leq} \int \left|\int f_{\epsilon}(z-x) - f_{\epsilon}^m(z-x)\mathrm{d}F_n(x) \right|\mathrm{d}z + \int \left|\int f_{\epsilon}(z-x) - f_{\epsilon}^m(z-x)\mathrm{d}F_X(x) \right|\mathrm{d}z \nonumber\\
        &\leq \int \int \left| f_{\epsilon}(z-x) - f_{\epsilon}^m(z-x)\right| \mathrm{d}z\mathrm{d}F_n(x) +  \int \int \left| f_{\epsilon}(z-x) - f_{\epsilon}^m(z-x)\right| \mathrm{d}z\mathrm{d}F_X(x)\nonumber\\ 
        &\leq 2||f_{\epsilon}^m - f_{\epsilon}||_{L^1} \leq \frac{2}{m}. \nonumber
\end{align}
    The second term in (\ref{convolution_continuity_proof}) may be written as:
    \[\int \left|\int f_{\epsilon}^m(z-x)\mathrm{d}(F_n -F_X)(x) \right|\mathrm{d}z = || \varphi_{n,m} - \varphi_{m}||_{L^1}, \]
    with $\varphi_{n,m}$ and $\varphi_{m}$ defined as:
    \begin{align*}
        \varphi_{n,m}(z) = \int f_{\epsilon}^m(z-x)\mathrm{d}F_n(x),
\text{\quad and \quad} \varphi_{m}(z) = \int f_{\epsilon}^m(z-x)\mathrm{d}F_X(x).  \end{align*}
Because $f_{\epsilon}^m$ is a probability density, so are $\varphi_{m}$ and $\varphi_{n,m}$ for all $n \in \NN$. By the continuity of $f_{\epsilon}^m$ and the weak convergence of $F_n$ to $F_X$ we obtain that $\varphi_{n,m}$ converges pointwise to $\varphi_{m}$ as $n \to \infty$. By Scheff\'e's Theorem pointwise convergence of probability densities to another probability density implies that these densities also converge in $L^1$. Combining all results yields:
\[\lim_{n \to \infty}||f_n -f_Y||_{L_1} \leq \lim_{n \to \infty}\frac{2}{m} + || \varphi_{n,m} - \varphi_{m}||_{L^1}= \frac{2}{m}. \]
Letting $m \to \infty$ we obtain the desired result.
\end{proof}

\begin{lemma}\label{continuous_density_approximation}
Let $f$ be a Lebesgue density on $\RR$, for every $\varepsilon > 0$ there exists a bounded continuous probability density function $g$ such that $\left\Vert g - f\right\Vert_{L^1} < \varepsilon$.
\end{lemma}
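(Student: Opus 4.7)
The plan is to build the approximating density $g$ in three stages: first approximate $f$ in $L^1$ by a continuous compactly supported function, then take its positive part to restore non-negativity, and finally renormalize to obtain a probability density while carefully controlling the accumulated error.

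Fix $\varepsilon > 0$, and without loss of generality assume $\varepsilon < 1$. First, invoke the classical result that $C_c(\RR)$, the space of continuous functions with compact support, is dense in $L^1(\RR)$. Hence there exists $h \in C_c(\RR)$ with $\left\Vert h - f\right\Vert_{L^1} < \varepsilon/3$. In particular $h$ is bounded and continuous. Next, define $h^{+}(x) = \max(h(x), 0)$, which is still a bounded continuous function with compact support. Because $f \geq 0$ almost everywhere, the pointwise inequality $|h^{+}(x) - f(x)| \leq |h(x) - f(x)|$ holds for almost every $x$ (splitting according to the sign of $h$), so $\left\Vert h^{+} - f\right\Vert_{L^1} \leq \left\Vert h - f\right\Vert_{L^1} < \varepsilon/3$.

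Let $c = \int h^{+}(x)\mathrm{d}x \geq 0$. By the triangle inequality applied to the $L^1$ norm and the fact that $\int f = 1$,
\begin{equation*}
|c - 1| = \left|\int h^{+} - \int f\right| \leq \left\Vert h^{+} - f\right\Vert_{L^1} < \varepsilon/3 < 1/3,
\end{equation*}
so in particular $c > 0$ and we may define the bounded continuous probability density $g(x) := h^{+}(x)/c$. It remains to estimate $\left\Vert g - f\right\Vert_{L^1}$, and this is the only step that requires care. Write $g - f = (h^{+}/c - h^{+}) + (h^{+} - f)$ so that
\begin{equation*}
\left\Vert g - f\right\Vert_{L^1} \leq \left|\frac{1}{c} - 1\right|\int h^{+} + \left\Vert h^{+} - f\right\Vert_{L^1} = |1 - c| + \left\Vert h^{+} - f\right\Vert_{L^1} < \frac{2\varepsilon}{3} < \varepsilon.
\end{equation*}
This gives the desired bound, completing the construction.

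The main obstacle, modest as it is, lies in the renormalization step: one must ensure that $\int h^{+}$ is both strictly positive (so that division is legitimate) and close to $1$ (so that renormalizing inflates the error only mildly). Both facts follow quantitatively from the $L^1$-closeness of $h^{+}$ to $f$, which is why the initial approximation of $f$ by a compactly supported continuous function has to be tight enough (the choice of $\varepsilon/3$ is merely cosmetic). No properties of $f$ beyond being a Lebesgue density are used, so the lemma holds in full generality.
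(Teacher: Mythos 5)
Your proof is correct and follows essentially the same route as the paper's: approximate $f$ by a (non-negative) compactly supported continuous function, renormalize, and bound the extra error from normalization using that $\int h^{+}$ is close to $1$. The only difference is cosmetic — you make explicit the positive-part step $h \mapsto h^{+}$ (with the pointwise bound $|h^{+}-f|\leq|h-f|$ since $f\geq 0$), whereas the paper directly chooses a non-negative approximant.
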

\begin{proof}
Recall that the space of compactly supported continuous functions is dense in $L^1$. For $n \in \NN$ choose a continuous, compactly supported and non-negative function $g_n$ such that $\left\Vert g_n - f\right\Vert_{L^1} \leq 1 / (n + 1)$. By the reverse triangle inequality:
\begin{equation}
\left|\left\Vert g_n\right\Vert_{L^1} - 1 \right| = \left|\left\Vert g_n\right\Vert_{L^1} - \left\Vert f\right\Vert_{L^1}\right| \leq \left\Vert g_n - f\right\Vert_{L^1} \leq \frac{1}{n+1}. \label{continuous_approximation_proof}
\end{equation}
Define: $\tilde{g}_n = g_n/\left\Vert g_n\right\Vert_{L^1}$. Note that by (\ref{continuous_approximation_proof}), $\left\Vert g_n\right\Vert_{L^1} > 0$. Hence, $\tilde{g}_n$ is a bounded and continuous probability density function. Combining all results:
\begin{align*}
    \left\Vert \tilde{g}_n - f \right\Vert_{L^1} &= \frac{1}{\left\Vert g_n\right\Vert_{L^1}}\int \left|g_n(x) - f(x) + f(x) - \left\Vert g_n\right\Vert_{L^1}f(x) \right|\mathrm{d}x\\
    &\leq \frac{1}{\left\Vert g_n\right\Vert_{L^1}} \left(\left\Vert g_n - f\right\Vert_{L^1} + \left| \left\Vert g_n\right\Vert_{L^1} - 1\right|\cdot \left\Vert f\right\Vert_{L^1} \right) \\
    &\leq \frac{\frac{1}{n+1} + \frac{1}{n+1}}{1 -\frac{1}{n+1}} = \frac{2}{n}.
\end{align*}
Because this holds for all $n \in \NN$ we obtain the desired result.
\end{proof}

\begin{proof}[Proof of Lemma \ref{consistent_truncation_theorem}]
    We first note the following:
    \begin{equation}
    \sup_{0 \leq \lambda < t_n}|H_n(\lambda) - H(\lambda)| = \sup_{0 \leq \lambda < t_n} H(\lambda) = H(t_n). \label{truncation_proof_part1}
\end{equation}
Let us now assume $\lambda \geq t_n$. By definition:
\begin{equation}
    H(\lambda) - H_n(\lambda) = \frac{\int_0^\lambda \frac{1}{x}\mathrm{d}H^b(x)\int_{t_n}^\infty \frac{1}{x}\mathrm{d}H_n^b(x) - \int_{t_n}^\lambda \frac{1}{x}\mathrm{d}H_n^b(x)\int_0^\infty \frac{1}{x}\mathrm{d}H^b(x)}{\int_0^\infty \frac{1}{x}\mathrm{d}H^b(x)\int_{t_n}^\infty \frac{1}{x}\mathrm{d}H_n^b(x)}.\label{truncation_estimator_difference}
\end{equation}
The numerator of (\ref{truncation_estimator_difference}) may be written as:
\begin{align}
\begin{split}
    &\phantom{=} \int_{t_n}^\infty \frac{1}{x}\mathrm{d}H_n^b(x)\left(\int_0^\lambda \frac{1}{x}\mathrm{d}H^b(x) - \int_{t_n}^\lambda \frac{1}{x}\mathrm{d}H_n^b(x)\right)\\ &\phantom{=}\quad  - \int_{t_n}^\lambda \frac{1}{x}\mathrm{d}H_n^b(x) \left(\int_0^\infty \frac{1}{x}\mathrm{d}H^b(x) - \int_{t_n}^\infty \frac{1}{x}\mathrm{d}H_n^b(x) \right) \nonumber
\end{split}\\
\begin{split}
        &= \int_{t_n}^\infty \frac{1}{x}\mathrm{d}H_n^b(x)\left(\int_{t_n}^\lambda \frac{1}{x}\mathrm{d}(H^b - H_n^b)(x) + \int_0^{t_n}\frac{1}{x}\mathrm{d}H^b(x) \right)\\
        &\phantom{=}\quad - \int_{t_n}^\lambda \frac{1}{x}\mathrm{d}H_n^b(x)\left(\int_{t_n}^\infty \frac{1}{x}\mathrm{d}(H^b - H_n^b)(x) + \int_0^{t_n}\frac{1}{x}\mathrm{d}H^b(x) \right). \label{numerator_truncation_proof}
\end{split}
\end{align}
Recall: $\EE(\Lambda) = \int_0^\infty \lambda \mathrm{d}H(\lambda) = 1/\int_0^\infty (1/x) \mathrm{d}H^b(x)$. Plugging (\ref{numerator_truncation_proof}) back into (\ref{truncation_estimator_difference}) yields:
\begin{align*}
    H(\lambda) - H_n(\lambda) &= \EE(\Lambda)\left(\int_{t_n}^\lambda \frac{1}{x}\mathrm{d}(H^b - H_n^b)(x)\right) + H(t_n) \\
    &\quad - \EE(\Lambda)H_n(\lambda)\left(\int_{t_n}^\infty \frac{1}{x}\mathrm{d}(H^b - H_n^b)(x)\right) - H_n(\lambda)H(t_n).
\end{align*}
Therefore, we obtain the following bound:
\begin{equation}
\sup_{\lambda \geq t_n}\left|H(\lambda) - H_n(\lambda)\right| \leq 2 \EE(\Lambda)\sup_{\lambda \geq t_n} \left|\int_{t_n}^\lambda \frac{1}{x}\mathrm{d}(H^b - H_n^b)(x) \right| + H(t_n).    \label{truncation_proof_bound}
\end{equation}
The integral in (\ref{truncation_proof_bound}) may be computed via integration by parts:
\begin{align}
    \sup_{\lambda \geq t_n} &\left|\int_{t_n}^\lambda \frac{1}{x}\mathrm{d}(H^b - H_n^b)(x) \right|  = \nonumber\\
    &= \sup_{\lambda \geq t_n} \left|\frac{H_n^b(\lambda) - H^b(\lambda)}{\lambda} - \frac{H_n^b(t_n) - H^b(t_n)}{t_n} - \int_{t_n}^\lambda \left(H_n^b(x) - H^b(x)\right)\mathrm{d}\frac{1}{x}\right|\nonumber \\
    &\leq 2 \frac{\sup_{\lambda \geq t_n} \left|H_n^b(\lambda) - H^b(\lambda) \right|}{t_n} + \sup_{\lambda \geq t_n}\left|H_n^b(\lambda) - H^b(\lambda) \right|\cdot \left|\int_{t_n}^\lambda \mathrm{d}\frac{1}{x}\right|\nonumber \\
    &\leq 3 \frac{\Vert H_n^b - H^b\Vert_{\infty}}{t_n}. \label{truncation_proof_integral_bound}
\end{align}
Note that the bound in (\ref{truncation_proof_bound}) is greater than $H(t_n)$, by (\ref{truncation_proof_part1}) this means that the bound also holds when taking the supremum over $\lambda \geq 0$ instead. Combining (\ref{truncation_proof_bound}) and (\ref{truncation_proof_integral_bound}) we finally obtain:
\begin{equation}
    \Vert H_n - H\Vert_{\infty} \leq 6\EE(\Lambda)\frac{\Vert H_n^b - H^b\Vert_{\infty}}{t_n} + H(t_n).\label{truncation_proof_final_bound}
\end{equation}
Letting $n$ go to infinity, $H(t_n)$ converges to zero by the continuity of $H$. Using this and the fact that $(H_n^b)_{n \geq 1}$ converges uniformly to $H^b$ with rate $t_n$ (by assumption) the RHS of (\ref{truncation_proof_final_bound}) converges to zero.
\end{proof}

\section{Appendix: pseudo-code of algorithms}\label{appendix_algorithms}

\begin{algorithm}[H]
\caption{Expectation Maximization (EM)}
\begin{algorithmic}[1]
\Require Observed order statistics: $s_1 < s_2 < \dots < s_n$.
\Ensure The MLE $\hat{H}_n^b$. 
\State $k := 0$
\State $p^{(0)} := (\frac{1}{n},\frac{1}{n},\dots,\frac{1}{n}) \in \mathcal{P}_n$
\While{Stopping criterion is not met}
\State$p_j^{(k+1)} := \frac{1}{n}\sum_{i=1}^n \frac{\alpha_{i, j}}{\sum_{q=1}^n \alpha_{i, q} p_q^{(k)}}p_j^{(k)}, \text{ \quad with: } \alpha_{i,j} = g_K^S\left(\frac{s_i}{s_j}\right)\frac{1}{s_j}$
\State $k := k + 1$
\EndWhile
\State $\hat{H}_n^b(s_j) := \sum_{i=1}^j p_i^{(k)}$ for $j \in \{1,\dots,n\}$.
\State
\Return $\hat{H}_n^b$
\end{algorithmic}
\label{EMalgorithm}
\end{algorithm}

\begin{algorithm}[H]
\caption{Iterative Convex Minorant (ICM)}
\begin{algorithmic}[1]
\Require A convex function $\phi:\mathcal{C}_{+} \to \RR \cup \{\infty\}$. $\beta^{(0)}$ with $\phi(\beta^{(0)}) < \infty$.
\Ensure A minimizer of $\phi$. 
\State $k := 0$
\State $\beta^{(0)} := (\frac{1}{n},\frac{2}{n},\dots,\frac{n}{n}) \in \mathcal{C}$
\While{Stopping criterion is not met}
\State $\beta := \argmin_{y \in \mathcal{C}_{+}}\phi_{(k)}(y)$ \Comment{With $\phi_{(k)}$ as in (\ref{quadratic_form_icm})}
\If{$\phi(\beta) < \phi(\beta^{(k)}) + \epsilon \nabla\phi(\beta^{(k)})^\intercal(\beta - \beta^{(k)})$}
\State $\beta^{(k+1)} := \beta$
\Else
\State $\lambda :=1$, $s:=\frac{1}{2}$, $z:=\beta$.
\While{$\phi(z) < \phi(\beta^{(k)}) + (1-\epsilon)\nabla\phi(\beta^{(k)})^\intercal (z - \beta^{(k)})$ (I) {\bf or} \\ $\phi(z) > \phi(\beta^{(k)}) + \epsilon\nabla\phi(\beta^{(k)})^\intercal (z - \beta^{(k)})$ (II)}
\IfThen{(I)}{$\lambda := \lambda +s$} 
\EndIf
\IfThen{(II)}{$\lambda := \lambda -s$}
\EndIf
\State $z:=\beta^{(k)} + \lambda(\beta - \beta^{(k)})$
\State $s := \frac{s}{2}$ 
\EndWhile
\State $\beta^{(k+1)} := z$
\EndIf
\State $k := k + 1$
\EndWhile
\State
\Return {$\beta^{(k)}$}
\end{algorithmic}
\label{ICMalgorithm}
\end{algorithm}

\section*{Acknowledgements}
We thank Kees Bos, Jilt Sietsma and Karo Sedighiani for fruitful discussions. 

\bibliographystyle{ieeetr}
\bibliography{export,misc}

\end{document}